\numberwithin{equation}{section}
\theoremstyle{plain}
\newtheorem{theorem}{Theorem}
\newtheorem{lemma}[theorem]{Lemma}
\newtheorem{remark}[theorem]{Remark}
\newtheorem{assumption}[theorem]{Assumption}
\title{Asymptotic Smiles for an Affine Jump-Diffusion Model}
\author[1]{Nian Yao}
\author[2,*]{Zhiqiu Li}
\author[1]{Zhichao Ling}
\author[1]{Junfeng Lin}
\affil[1]{College of Mathematics and Statistics, Shenzhen University, Shenzhen 518060, Guangdong, China}
\affil[2]{Department of Mathematics, Florida State University, Tallahassee, FL 32304, USA}
\affil[*]{Corresponding author zli@math.fsu.edu}
\affil[1]{yaonian@szu.edu.cn\\
1800202002@email.szu.edu.cn\\
2016020338@email.szu.edu.cn}
\begin{document}

\maketitle

\begin{abstract}
In this paper, we study the asymptotic behaviors of implied volatility of an affine jump-diffusion model. 
Let log stock price under risk-neutral measure follow an affine jump-diffusion model, we show that an explicit form of moment generating function for log stock price can be obtained by solving a set of ordinary differential equations. A large-time large deviation principle for log stock price is derived by applying the G\"{a}rtner-Ellis theorem. We characterize the asymptotic behaviors of the implied volatility in the large-maturity and large-strike regime using rate function in the large deviation principle. The asymptotics of the Black-Scholes implied volatility for fixed-maturity, large-strike and fixed-maturity, small-strike regimes are also studied. Numerical results are provided to validate the theoretical work.

\textbf{Keywords}: Stochastic processes; affine jump-diffusion model; large deviation principle; asymptotic implied volatility smiles;
\end{abstract}

\section{Introduction}\label{intro}
${}\quad$ Point process models the arrival times of events in many applications. Affine point process (or affine jump-diffusion model, or affine point process driven by a jump-diffusion) is a point process whose event arrival intensity is driven by an affine jump-diffusion (\cite{duffie2000transform}).
An affine point process can be further characterized as self-exciting or mutual-exciting. A self-exciting process means a jump increases the probabilities of occurrence of future jumps in the same component; while a mutual-exciting process increases the jump intensity in other components as well.

Because the affine point process has computational tractability, there have been many applications in finance and economics, such as \cite{errais2010affine, zhang2015affine, ait2015modeling, zhang2018affine, gao2019affine}.
\cite{errais2010affine} used affine point processes to model the cumulative losses due to corporate defaults in a portfolio. They assumed jump occurrence times are default times; while the jump sizes are the portfolio losses at defaults. They used index and tranche swap rates before and after Lehman Brothers' bankruptcy to conduct a market calibration study. Their results indicated the empirical importance of self-exciting property of a loss process. Meanwhile, they showed a simple affine point process is able to capture the implied default correlations during the month when Lehman defaulted.
\cite{ait2015modeling} observed jumps in stock markets extend over hours or days and across multiple markets. They concluded that a self-exciting (in time) and mutual-exciting (in space) process is capable of capturing such clustering patterns.
\cite{zhang2015affine} established a central limit theorem and a large deviation principle for affine point processes. By using these limits, they derived closed-form approximations to the distribution of an affine point process. The large deviation principle helped to construct an importance sampling scheme for estimating tail probabilities.
\cite{zhang2018affine} developed stochastic stability conditions for affine jump-diffusion processes.
By imposing a strong mean-reversion condition and a mild condition on the jump distribution, they established ergodicity for affine jump-diffusion processes.
They proved strong laws of large numbers and functional central limit theorems for additive functionals for this class of models.
When a closed-form solution of the characteristic function of an affine jump-diffusion process is not available, pricing evaluation requires a numerical solution of a set of ODEs within a numerical inversion routine. However, it is computationally expensive as the numerical transform inversion evokes thousands of calculations and each calculation requires the solution of a system of ODEs.
Later \cite{gao2019affine} extended the large-time limit theorems in \cite{zhang2015affine}. They derived large-time asymptotic expansions in large deviations and refined central limit theorem for affine point processes. They proposed a new approach based on the mod-$\phi$ convergence theory and they obtained the precise large deviations and refined central limits for an affine point process simultaneously. By truncating the asymptotic expansions, they obtained an explicit approximation for large deviation probabilities and tail expectations; meanwhile, such explicit approximation can be used as importance sampling in Monte Carlo simulations.

Affine point process includes the linear Markovian Hawkes process as a special case \cite{hawkes1971spectra, hawkes1971point}. Hawkes process has wide range of applications in various domains such as seismology \cite{ogata1988statistical}, genome analysis \cite{reynaud2010adaptive}, social network \cite{crane2008robust}, modeling of crimes \cite{mohler2011self} and finance \cite{bacry2015hawkes} (\cite{bacry2015hawkes} provided a comprehensive survey of applications of Hawkes process in finance).

Option pricing problems have been well studied when the underlying follows a jump-diffusion process. Back to the 1970s, \cite{merton1976option} proposed a jump-diffusion process and assumes the jump size follows a log-normal distribution. They showed a European option can be written as a weighted sum of Black-Scholes European option prices. Later \cite{kou2002jump} assumed the jump size follows a double exponential distribution and a closed-form solution was provided.

As to the underlying follows an affine jump-diffusion point process or has Hawkes jumps, option pricing problems are much less studied. This is because of the closed-form solution of option pricing is no longer available. For instance, \cite{ma2017pricing} studied a vulnerable European option pricing problem assuming underlying asset and option writer's asset value both following the Hawkes processes. However, as the analytic solutions are unavailable, they implemented the thinning algorithm to compare the proposed model performance versus other models. 


There have been studies on option pricing problems at asymptotic regimes.
\cite{FordeJacquier2011} studied the large-time asymptotic behaviors of European call and put options under the Heston stochastic volatility model. They derived the large-time large deviation principle for the log return of underlying over time-to-maturity by applying the G\"{a}rtner-Ellis theorem. At the same time, they derived the asymptotic Black-Scholes implied volatility at large-time. Later \cite{jacquier2016large} characterizes the forward implied volatility smiles for the same model. Similar work has been extended to other stochastic volatility models, such as the SABR and CEV-Heston models (\cite{forde2013large}), a class of affine stochastic volatility models (\cite{jacquier2013large}) and multivariate Wishart stochastic volatility models (\cite{alfonsi2019long}).

\cite{Lee2004} studied the asymptotics of the Black-Scholes implied volatility in the regime where maturity $T$ is fixed and strike is large or small. They showed the large-strike tail of the implied volatility skew is bounded by $O(|x|^{1/2})$, where $x$ is log-moneyness.
They proved the explicit moment formula that determines the smallest coefficient in that bound for a given $T$.
In addition, they pointed out the linkage between finite moments and tail slopes is model-independent. Some applications of moment formula such as skew extrapolation and model calibration were discussed.

In this paper, we study the asymptotic behaviors of the implied volatility of an affine jump-diffusion model. This article is organized as follows: In Section 2.1, we express the moment generating function of the affine jump-diffusion model as solutions of a set of ordinary differential equations by using the Feynman-Kac formula. In Section 2.2, we obtain the large-time large deviation principle of the log return of the stock price under the risk-neutral measure by using G\"{a}rtner-Ellis theorem. In Section 2.3, we characterize the asymptotic behaviors of the implied volatility in the large-maturity and large-strike regime using rate function in the large deviation principle. In Section 2.4, we study the asymptotic of the implied volatility for fixed-maturity, large-strike and fixed-maturity small-strike regimes. In Section 3, we conduct numerical studies to validate the theoretical work. Lastly, conclusion remarks are in Section 4.

\section{Affine jump-diffusion model}
${}\quad$ We assume the underlying stock $S_t$ under the risk-neutral measure $\mathbb{Q}$ follows an affine jump-diffusion model:
\begin{equation}\label{affine point dynamics}
\frac{dS_{t}}{S_{t-}}
=\sigma dW_{t}^{\mathbb{Q}}
+(dJ_{t}-\lambda_{t}^{N}\mu_{Y}dt),
\end{equation}
where
\begin{equation}
J_{t}=\sum_{i=1}^{N_{t}}(e^{Y_{i}}-1),
\end{equation}
where $Y_{i}$ are i.i.d. random jump sizes independent of $N_{t}$ and $W_t^\mathbb{Q}$ and $\mu_{Y}=\mathbb{E}[e^{Y}]-1$. $Y_{i}$ follows a probability distribution $Q(da)$. We assume that $N_{t}$ is an affine point process which has intensity
$\lambda^{N}_{t}=\alpha+\beta \lambda_t$ at $t > 0$ and $\lambda_t$ satisfies the dynamics:
\begin{equation}\label{APP}
d\lambda_{t}=b(c-\lambda_{t})dt+\sigma\sqrt{\lambda_{t}}dB_{t}
+a dN_{t}.
\end{equation}

We make following basic assumptions that are required for modelling an affine jump-diffusion model (\cite{ZLJ2014B}):
\begin{assumption}\label{1}
\begin{itemize}
 \item[1.] $a,b,c,\alpha,\beta,\sigma > 0$.
 \item[2.] $b>a\beta$. This condition indicates that there exists a unique
stationary process $\lambda^\infty$ which satisfies the dynamics (\ref{APP}).
 \item[3.] $2bc\ge\sigma^2$. This condition implies that $\lambda_t\ge 0$ with probability 1.
\end{itemize}
\end{assumption}

Also we assume that $B_{t}$ is independent of $W_{t}^{\mathbb{Q}}$.
One should notice that, the point process $N_t$ reduces to a linear Hawkes process with an exponential decay kernel when the Brownian motion term $B_t=0$.
If $adN_t=0$, then the process $\lambda_t$ reduces to a Cox–Ingersoll–Ross process.
The log stock price under the risk-neutral measure via $S_{t}=S_{0}e^{X_{t}}$ is
\begin{equation}
X_{t}=-\frac{1}{2}\sigma^{2}t+\sigma W_{t}^{\mathbb{Q}}-\mu_{Y}\int_{0}^{t}\lambda^N_{s}ds+\sum_{i=1}^{N_{t}}Y_{i}.
\end{equation}
We can write $N_t=\sum_{i=1}\mathbbm{1}_{\{T_i\le t\}}$ and $L_t=\sum_{i\ge 1}Y_i\mathbbm{1}_{\{T_i\le t\}}$ where $T_n$ is the n-th jump time of $N_t$.
The two-dimensional process $(\lambda,L)$ is Markovian on $D={\mathbb{R}_+}\times{\mathbb{R}}$ with an infinite generator given by
\begin{equation}\label{generator}
\mathcal{L}f(\lambda, L)=b(c-\lambda)\frac{\partial f}{\partial \lambda}+\frac{1}{2}\sigma^2\lambda\frac{\partial^2 f}{\partial \lambda^2}+ (\alpha+\beta\lambda)\int_{\mathbb{R}}(f(\lambda+a,L+y)-f(\lambda,L))Q(dy)
\end{equation}
for a given function $f:{\mathbb{R}_+}\times{\mathbb{R}}\rightarrow\mathbb{R}$ with twice continuously differentiable and for all $\lambda\in \mathbb{R}_+$, $|\int_{\mathbb{R}}f(L+y,\lambda+a)Q(dy)|<\infty$.

\subsection{Moment generating function for $X_t$}
${}\quad$ In this section, we compute the moment generating function for $X_t$. The result is summarized in following Lemma \ref{mgf_X}.
\begin{lemma}\label{mgf_X}
The moment generating function for $X_t$ is
\begin{equation}
\mathbb{E}[e^{\theta X_{t}}]
= e^{(-\frac{1}{2}\theta\sigma^{2}+\frac{1}{2}\theta^{2}\sigma^{2}-\theta\mu_Y\alpha)t
+D(t;\Theta)\lambda+\theta_3L+F(t;\Theta)}
\end{equation}
where $\theta\in\mathbb{R}$, $\Theta=(\theta_1,\theta_2,\theta_3)\in \mathbb{R}^3$ and $D(t;\Theta)$, $F(t;\Theta)$ satisfy the following ordinary differential equations
\begin{equation}
\left\{\begin{array}{r@{}l@{\qquad}l}
& D'(t;\Theta)+bD(t;\Theta)-\frac{1}{2}\sigma^2D^2(t;\Theta)-\beta\int_{\mathbb{R}}(e^{D(t;\Theta)a+\theta_3y}-1)Q(dy)-\theta_1=0,\\
& F'(t;\Theta)-bcD(t;\Theta)-\alpha\int_{\mathbb{R}}(e^{D(t;\Theta)a+\theta_3y}-1)Q(dy)=0,\\
& D(0;\Theta)=\theta_2, F(0;\Theta)=0.
\end{array}\right.
\end{equation}
\end{lemma}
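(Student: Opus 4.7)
The plan is to apply the standard Duffie--Pan--Singleton affine transform: exploit the affine structure of the generator $\mathcal{L}$ in (\ref{generator}) to reduce the MGF to a Riccati-type ODE system via an exponential-affine ansatz and a Feynman--Kac argument.

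First I would peel off the Gaussian component. Since $W^{\mathbb{Q}}$ is independent of $(B,N,\{Y_i\})$ and hence of the bivariate Markov process $(\lambda,L)$, rewrite
$$X_t = \bigl(-\tfrac{1}{2}\sigma^{2} - \mu_Y\alpha\bigr)t + \sigma W_t^{\mathbb{Q}} - \mu_Y\beta\int_{0}^{t}\lambda_s\,ds + L_t,$$
condition on the filtration generated by $(\lambda,L)$, and use $\mathbb{E}[e^{\theta\sigma W_t^{\mathbb{Q}}}]=e^{\theta^{2}\sigma^{2}t/2}$. Combining with the deterministic drift produces exactly the prefactor $\exp\bigl((-\tfrac{1}{2}\theta\sigma^{2}+\tfrac{1}{2}\theta^{2}\sigma^{2}-\theta\mu_Y\alpha)t\bigr)$ in the stated formula. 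What remains is the bivariate affine transform
$$\mathbb{E}^{(\lambda,L)}\Bigl[\exp\Bigl(\theta_{1}\int_{0}^{t}\lambda_s\,ds + \theta_{2}\lambda_t + \theta_{3}L_t\Bigr)\Bigr]$$
with $\theta_{1}=-\theta\mu_Y\beta$, $\theta_{3}=\theta$, and $\theta_{2}$ a free terminal weight ($\theta_{2}=0$ recovers the pure MGF of $X_t$).

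Next I would apply Feynman--Kac with the exponential-affine ansatz
$$v(s,\lambda,L) = \exp\bigl(D(t-s)\,\lambda + \theta_{3}L + F(t-s)\bigr),$$
and consider $M_s := e^{\theta_{1}\int_{0}^{s}\lambda_u\,du}\,v(s,\lambda_s,L_s)$. By It\^o's formula with the generator (\ref{generator}), $M_s$ is a local martingale iff $\partial_s v + \mathcal{L}v + \theta_{1}\lambda v = 0$. Plugging in the ansatz, using $\partial_\lambda v = Dv$, $\partial_\lambda^{2}v = D^{2}v$, and $v(\lambda+a,L+y)/v(\lambda,L) = e^{Da+\theta_{3}y}$, and then matching the coefficient of $\lambda$ and the constant term separately, I recover precisely the two ODEs stated; the initial data $D(0)=\theta_{2}$ and $F(0)=0$ come from the terminal identification $v(t,\lambda,L)=e^{\theta_{2}\lambda+\theta_{3}L}$. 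Together with $\mathbb{E}[M_t]=M_0=\exp\bigl(D(t)\lambda+\theta_{3}L+F(t)\bigr)$ this yields the claimed formula after combining with the Gaussian prefactor.

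The main technical hurdle will be upgrading $M_s$ from a local to a true martingale on $[0,t]$, so the identity $\mathbb{E}[M_t]=M_0$ is actually valid. This requires existence of the Riccati solution $D(\cdot)$ on $[0,t]$, finiteness of $\int_{\mathbb{R}}e^{D(\tau)a+\theta_{3}y}Q(dy)$ along the trajectory, and exponential-moment bounds on $\lambda$ supplied by Assumption \ref{1} (the condition $b>a\beta$ gives stability of $\lambda$, and $2bc\ge\sigma^{2}$ keeps $\lambda\ge 0$). Once a uniform $L^{1}$ bound on $M_s$ is established via standard localization, the argument closes.
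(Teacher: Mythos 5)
Your proposal is correct and follows essentially the same route as the paper: peel off the independent Gaussian and deterministic drift to get the prefactor $e^{(-\frac{1}{2}\theta\sigma^{2}+\frac{1}{2}\theta^{2}\sigma^{2}-\theta\mu_Y\alpha)t}$, then apply Feynman--Kac with an exponential-affine ansatz for the joint transform of $(\int_0^t\lambda_s ds,\lambda_t,L_t)$ and match the $\lambda$-coefficient and constant term to obtain the Riccati system with $D(0)=\theta_2$, $F(0)=0$ (the paper writes the backward PDE and then time-reverses, which is the same computation as your forward-time martingale formulation). Your closing remark on upgrading the local martingale to a true martingale is a point the paper does not address explicitly, but it does not change the substance of the argument.
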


\begin{proof}
Given any $\theta$ in $\mathbb{R}$, the moment generating function for $X_t$ is
\begin{equation}\label{moment gen_X affine}
\begin{aligned}
\mathbb{E}[e^{\theta X_{t}}]
&=\mathbb{E}\left[e^{\theta\left(-\frac{1}{2}\sigma^{2}t+\sigma W_{t}^{\mathbb{Q}}-\mu_{Y}\int_{0}^{t}\lambda^N_{s}ds+\sum_{i=1}^{N_{t}}Y_{i}\right)}\right]\\
&=e^{(-\frac{1}{2}\theta\sigma^{2}+\frac{1}{2}\theta^{2}\sigma^{2}-\theta\mu_Y\alpha)t}
\mathbb{E}[e^{-\theta\mu_{Y}\beta\int_{0}^{t}\lambda_{s}ds+\theta L_t}].\\
\end{aligned}
\end{equation}
For any $\Theta=(\theta_1,\theta_2,\theta_3)\in \mathbb{R}^3$, we assume
\begin{equation}\label{Fey-Kac}
\mathbb{E}[e^{\theta_1 \int_{t}^{T}\lambda_{s}ds+\theta_2\lambda_{T}+\theta_3L_T}|\lambda_{t}=\lambda,L_t=L]=u(t,\lambda,L):=u(t,\lambda,L,\Theta).
\end{equation}
By applying Feynman-Kac formula, we have
\begin{equation}\label{Fey-Kac int and L}
\left\{\begin{array}{r@{}l@{\qquad}l}
& \frac{\partial u}{\partial t}+b(c-\lambda)\frac{\partial  u}{\partial \lambda}\\
&+\frac{1}{2}\sigma^2\lambda\frac{\partial^2  u}{\partial \lambda^2}+ (\alpha+\beta\lambda)\int_{\mathbb{R}}( u(t,\lambda+a,L+y)- u(t,\lambda,L))Q(dy)+\theta_1\lambda u=0,\\
& u(T,\lambda,L,\Theta)=e^{\theta_2\lambda+\theta_3L}.
\end{array}\right.
\end{equation}
Let us try a solution in the form of 
$u(t,\lambda,L)=e^{A(t;\Theta)\lambda+B(t;\Theta)L+C(t;\Theta)}$, then $A(t;\Theta), B(t;\Theta), C(t;\Theta) $ satisfy the following ordinary differential equations
\begin{equation}
\left\{\begin{array}{r@{}l@{\qquad}l}
& A'(t;\Theta)-bA(t;\Theta)+\frac{1}{2}\sigma^2A^2(t;\Theta)+\beta\int_{\mathbb{R}}(e^{A(t;\Theta)a+B(t;\Theta)y}-1)Q(dy)+\theta_1=0,\\
& B'(t;\Theta)=0,\\
& C'+bcA(t;\Theta)+\alpha\int_{\mathbb{R}}(e^{A(t;\Theta)a+B(t;\Theta)y}-1)Q(dy)=0,\\
& A(T;\Theta)=\theta_2, B(T;\Theta)=\theta_3, C(T;\Theta)=0.
\end{array}\right.
\end{equation}
Then we have $u(s,\lambda,L)=e^{A(s;\Theta)\lambda+\theta_3L+C(s;\Theta)}$ and $A(s;\Theta), C(s;\Theta)$ satisfy the following ordinary differential equations
\begin{equation}\label{u}
\left\{\begin{array}{r@{}l@{\qquad}l}
& A'(t;\Theta)-bA(t;\Theta)+\frac{1}{2}\sigma^2A^2(t;\Theta)+\beta\int_{\mathbb{R}}(e^{A(t;\Theta)a+\theta_3y}-1)Q(dy)+\theta_1=0,\\
& C'+bcA(t;\Theta)+\alpha\int_{\mathbb{R}}(e^{A(t;\Theta)a+\theta_3y}-1)Q(dy)=0,\\
& A(T;\Theta)=\theta_2, C(T;\Theta)=0.
\end{array}\right.
\end{equation}
Let $f(t,\lambda,L):= f(t,\lambda,L,\Theta):= \mathbb{E}[e^{\theta_1 \int_{0}^{t}\lambda_{s}ds+\theta_2\lambda_{t}+\theta_3L_t}|\lambda_0=\lambda,L_0=L]$.
Let $u(t,\lambda,L)=f(T-t,\lambda,L)$ and make the time change $t \mapsto T-t$
to change the backward equation to the forward equation, we have
\begin{equation}\label{Fey-Kac int and L1}
\left\{\begin{array}{r@{}l@{\qquad}l}
&-\frac{\partial f}{\partial s}+b(c-\lambda)\frac{\partial  f}{\partial \lambda}\\
&+\frac{1}{2}\sigma^2\lambda\frac{\partial^2  f}{\partial \lambda^2}+ (\alpha+\beta\lambda)\int_{\mathbb{R}}(f(s,\lambda+a,L+y)- f(s,\lambda,L))Q(dy)+\theta_1\lambda f=0,\\
& f(0,\lambda,L,\Theta)=e^{\theta_2\lambda+\theta_3L}.
\end{array}\right.
\end{equation}
We try $f(s,\lambda,L)=e^{D(s;\Theta)\lambda+E(s;\Theta)L+F(s;\Theta)}$, then we have $D(s;\Theta),E(s;\Theta), F(s;\Theta)$ satisfy the following ordinary differential equations
\begin{equation}\label{Fey-Kac int and L2}
\left\{\begin{array}{r@{}l@{\qquad}l}
& D'(t;\Theta)+bD(t;\Theta)-\frac{1}{2}\sigma^2D^2(t;\Theta)-\beta\int_{\mathbb{R}}(e^{D(t;\Theta)a+E(t;\Theta)y}-1)Q(dy)-\theta_1=0,\\
& E'(t;\Theta)=0,\\
& F'-bcD(t;\Theta)-\alpha\int_{\mathbb{R}}(e^{D(t;\Theta)a+E(t;\Theta)y}-1)Q(dy)=0,\\
& D(0;\Theta)=\theta_2, E(0;\Theta)=\theta_3, F(0;\Theta)=0.
\end{array}\right.
\end{equation}
Finally we have $f(s,\lambda,L)=e^{D(s;\Theta)\lambda+\theta_3L+F(s;\Theta)}$ and $D(s;\Theta)$, $F(s;\Theta)$ satisfy the following ordinary differential equations
\begin{equation}\label{eqn:mgf_ODE}
\left\{\begin{array}{r@{}l@{\qquad}l}
& D'(s;\Theta)+bD(s;\Theta)-\frac{1}{2}\sigma^2D^2(s;\Theta)-\beta\int_{\mathbb{R}}(e^{D(s;\Theta)a+\theta_3y}-1)Q(dy)-\theta_1=0,\\
& F'(s;\Theta)-bcD(s;\Theta)-\alpha\int_{\mathbb{R}}(e^{D(s;\Theta)a+\theta_3y}-1)Q(dy)=0,\\
& D(0;\Theta)=\theta_2, F(0;\Theta)=0.
\end{array}\right.
\end{equation}
\end{proof}

\subsection{Large deviation principle for $X_t$}
${}\quad$ In this section, we derive the following theorem which describes the large-time large deviation asymptotic behaviors of the log stock price.
This result will be used later to derive the asymptotics for option pricing and implied volatility smiles in the regime where the maturity is large and the log-moneyness is of the same order as the maturity. We refer readers to \cite{DZ1998} for formal definition of large deviation principle and the applications.
\begin{theorem}\label{LDP for X affine} (Large Deviation Principle for $X_t$). Under Assumption 1, 
$\mathbb{Q}(\frac{1}{t}X_t\in\cdot)$
satisfies a large deviation principle on $\mathbb{R}$ with the rate function:
\begin{equation}\label{rate for affine X}
I(x)=\mathop{\sup}\limits_{\theta\in\mathbb{R}}\left\{\theta x-\Lambda(\theta)\right\},
\end{equation}
where $$\Lambda(\theta) = \left(\frac{1}{2}\sigma^2\theta^2-\left(\frac{1}{2}\sigma^2+\mu_Y\alpha\right)\theta+bcy(\theta)+\alpha\left(e^{ay(\theta)}
\mathbb{E}[ e^{\theta Y}]-1\right)\right)$$
and $y(\theta)$ is the smaller solution of the equation
\begin{equation}
-by+\frac{1}{2}\sigma^2y^2+\beta(\mathbb{E}[e^{ay+\theta Y}]-1)-\theta\mu_Y\beta=0,
\end{equation}
if solution exists. Otherwise $y(\theta)=+\infty$.
\end{theorem}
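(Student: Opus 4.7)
The plan is to apply the Gärtner–Ellis theorem to $X_t/t$. This reduces the proof to two tasks: (i) computing the limiting logarithmic moment generating function
\[
\Lambda(\theta) := \lim_{t \to \infty} \tfrac{1}{t} \log \mathbb{E}[e^{\theta X_t}]
\]
and matching it with the formula in the statement; and (ii) verifying that $\Lambda$ is essentially smooth on the interior of its effective domain, so that the associated rate function is $I(x) = \sup_\theta\{\theta x - \Lambda(\theta)\}$.

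For task (i), since $\lambda^N_s = \alpha + \beta \lambda_s$, the exponent $\theta X_t$ fits the template of Lemma \ref{mgf_X} with $\Theta = (-\theta \mu_Y \beta,\,0,\,\theta)$ (taking $L_0 = 0$ and $\lambda_0$ arbitrary). Hence
\[
\tfrac{1}{t}\log \mathbb{E}[e^{\theta X_t}] = -\tfrac{1}{2}\sigma^2 \theta + \tfrac{1}{2}\sigma^2 \theta^2 - \theta \mu_Y \alpha + \tfrac{D(t;\Theta)\lambda_0 + F(t;\Theta)}{t},
\]
and the whole computation reduces to the large-time behavior of the scalar ODEs for $D$ and $F$ in Lemma \ref{mgf_X}.

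The key step is a phase-line analysis of $D'(t) = g(D(t))$ with $D(0) = 0$, where
\[
g(y) := -by + \tfrac{1}{2}\sigma^2 y^2 + \beta\bigl(\mathbb{E}[e^{ay + \theta Y}] - 1\bigr) - \theta \mu_Y \beta.
\]
Because $g$ is strictly convex in $y$ (its second derivative is $\sigma^2 + \beta a^2 \mathbb{E}[e^{ay + \theta Y}] > 0$) and blows up to $+\infty$ as $|y| \to \infty$ whenever $\mathbb{E}[e^{aY}\mathbb{E}[e^{\theta Y}]]$ is finite, $g$ has at most two real roots. When two exist I will identify the smaller one $y_-(\theta)$ as the stable equilibrium and argue that $0$ lies in its basin of attraction: at $\theta = 0$ one has $y_-(0) = 0$ and $g'(0) = -b + a\beta < 0$ by Assumption \ref{1}, and continuity of the roots in $\theta$ keeps $0$ below the unstable root throughout the effective domain. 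It follows that $D(t) \to y(\theta)$, so $D(t)/t \to 0$; the $F$-equation then gives $F'(t) \to bc\, y(\theta) + \alpha(e^{a y(\theta)} \mathbb{E}[e^{\theta Y}] - 1)$, and by Cesàro $F(t)/t$ tends to the same limit. Assembling the pieces yields precisely the $\Lambda(\theta)$ claimed. When $g$ has no real root, $g > 0$ everywhere, $D(t) \nearrow +\infty$, and we set $y(\theta) = \Lambda(\theta) = +\infty$.

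The main obstacle is task (ii): verifying the Gärtner–Ellis smoothness hypotheses at the boundary of the effective domain $\mathcal{D}_\Lambda = \{\theta : y(\theta) < \infty\}$. Inside $\mathcal{D}_\Lambda$, differentiability of $y(\theta)$ and hence $\Lambda(\theta)$ follows from the implicit function theorem applied to $g(y,\theta) = 0$ using $\partial_y g(y_-(\theta),\theta) < 0$ away from tangencies. At a boundary point $\theta^*$ the two roots merge into a tangency where $\partial_y g = 0$, forcing $|y'(\theta)| \to \infty$ and hence steepness $|\Lambda'(\theta)| \to \infty$. Convexity of $\Lambda$ together with lower semicontinuity on $\mathbb{R}$ then complete the Gärtner–Ellis hypotheses, delivering the large deviation principle with rate function $I(x)$.
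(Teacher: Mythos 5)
Your proposal follows essentially the same route as the paper's proof: reduce to the ODEs of Lemma~\ref{mgf_X} with $\Theta=(-\theta\mu_Y\beta,0,\theta)$, identify the long-time limit of $D(t;\theta)$ with the smaller root $y(\theta)$ of the fixed-point equation so that $\Lambda(\theta)$ takes the stated form, and verify essential smoothness through the blow-up of $y'(\theta)$ at the parameter values where the two roots merge, before invoking the G\"artner--Ellis theorem. Your phase-line argument that $D(t)$ actually converges to the smaller (stable) root from $D(0)=0$, and your observation that $g'(0)<0$ at $\theta=0$ places the origin strictly inside the effective domain, are if anything slightly more explicit than the paper, which asserts the limits of $\bar D(t;\theta)/t$ and $\bar F(t;\theta)/t$ directly and instead spends its effort characterizing $[\theta_{\min},\theta_{\max}]$ via the convex function $G(\theta)$.
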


\begin{proof}
From (\ref{moment gen_X affine}) and (\ref{eqn:mgf_ODE}) we know $(\theta_1,\theta_2,\theta_3)=(-\theta\mu_Y\beta,0,\theta)$ and, for any $\theta\in \mathbb{R}$, we have:
\begin{equation}\label{moment gen_X}
\begin{aligned}
\mathbb{E}[e^{\theta X_{t}}]
&=e^{(-\frac{1}{2}\theta\sigma^{2}+\frac{1}{2}\theta^{2}\sigma^{2}-\theta\mu_Y\alpha)t}
\mathbb{E}[e^{-\theta\mu_{Y}\beta\int_{0}^{t}\lambda_{s}ds+\theta L_t}]\\
&=e^{(-\frac{1}{2}\theta\sigma^{2}+\frac{1}{2}\theta^{2}\sigma^{2}-\theta\mu_Y\alpha)t+\bar{D}(t,\theta)\lambda+\theta L+\bar{F}(t,\theta)}.
\end{aligned}
\end{equation}
where $\bar{D}(t;\theta)$ and $\bar{F}(t;\theta)$ satisfy the following ordinary differential equations
\begin{equation}\label{DF}
\left\{\begin{array}{r@{}l@{\qquad}l}
&\bar{D}'(t;\theta)+b\bar{D}(t;\theta)-\frac{1}{2}\sigma^2\bar{D}^2(t;\theta)-\beta\int_{\mathbb{R}}(e^{\bar{D}(t;\theta)a+\theta y}-1)Q(dy)+\theta\mu_Y\beta=0,\\
& \bar{F}'(t;\theta)-bc\bar{D}(t;\theta)-\alpha\int_{\mathbb{R}}(e^{\bar{D}(t;\theta)a+\theta y}-1)Q(dy)=0,\\
& \bar{D}(0;\theta)=0, \bar{F}(0;\theta)=0.
\end{array}\right.
\end{equation}

Thus, from (\ref{moment gen_X}) we have
\begin{align*}
\Lambda(\theta):&=\lim_{t\rightarrow\infty}\frac{1}{t}\log\mathbb{E}[e^{\theta X_{t}}]\\
                &=\frac{1}{2}\sigma^2\theta^2-\left(\frac{1}{2}\sigma^2+\mu_Y\alpha\right)\theta
                +\lambda\lim_{t\rightarrow\infty}\frac{\bar{D}(t;\theta)}{t}+\lim_{t\rightarrow\infty}\frac{\bar{F}(t;\theta)}{t},
\end{align*}

From (\ref{DF}), one can see that
\begin{equation*}
\begin{array}{r@{}l@{\qquad}l}
\Gamma(D,\theta):&=-bD+\frac{1}{2}\sigma^2 D^2+\beta\int_{\mathbb{R}}(e^{aD+\theta y}-1)Q(dy)-\theta\mu_Y\beta\\
    &=-bD+\frac{1}{2}\sigma^2D^2+\beta(\mathbb{E}[e^{aD+\theta Y}]-1)-\theta\mu_Y\beta.
\end{array}
\end{equation*}

Next we want to find the range of $\theta$ such that
\begin{equation}\label{Gamma function}
\Gamma(y,\theta)=-by+\frac{1}{2}\sigma^2y^2+\beta(\mathbb{E}[e^{ay+\theta Y}]-1)-\theta\mu_Y\beta=0
\end{equation}
has a solution of $y(\theta)$.
We know that
\begin{align*}
\Gamma_y'(y,\theta)&= -b +\sigma^2 y+ a\beta e^{ay}\mathbb{E}[e^{\theta Y}],\\
\Gamma_y''(y,\theta)&=\sigma^2+a^2\beta e^{ay} \mathbb{E}[e^{\theta Y}]
\end{align*}
and we find that $\Gamma_y''(y,\theta)>0$, so $\Gamma(y,\theta)$ is convex and $\Gamma_y'(y,\theta)$ is increasing in $y$. Clearly we have ${\lim\limits_{y \to -\infty}}\Gamma_y'(y,\theta)=-\infty$
and ${\lim\limits_{y \to +\infty}}\Gamma_y'(y,\theta)=+\infty$, so there exists a unique $y_c(\theta)$ which satisfies the following equation,
\begin{equation}\label{yc equation}
-b +\sigma^2 y_c+a\beta e^{ay_c}\mathbb{E}[e^{\theta Y}]=0.
\end{equation}
We take the derivative of $y_c(\theta)$ on $\theta$,
\begin{equation}
y_c'(\theta)=-\frac{a\beta e^{ay_c(\theta)}\mathbb{E}[Y e^{\theta Y}]}{\sigma^2+a^2\beta e^{a y_c(\theta)}\mathbb{E}[e^{\theta Y}]}\\
\end{equation}
And we can rewrite $\Gamma(y_c(\theta),\theta)$
\begin{equation}
\Gamma(y_c(\theta),\theta)=G(\theta):=-b y_c(\theta)+\frac{\sigma^2}{2} y_c^2(\theta) + \beta e^{a y_c(\theta)} \mathbb{E}[e^{\theta Y}]-\beta(\theta \mu_Y +1)
\end{equation}
Now we arrive at find the scope of $\theta$ such that $G(\theta)\le 0$. Take the derivative of $G(\theta)$ on $\theta$,
\begin{equation}
G'(\theta)=\beta\left( e^{a y_c(\theta)}\mathbb{E}[Ye^{\theta Y}]-\mu_Y\right)
\end{equation}
\begin{equation}
G''(\theta)=\frac{\sigma^2 \beta e^{ay_c(\theta)}\mathbb{E}[Y^2e^{\theta Y}]+a^2 \beta^2 e^{2ay_c(\theta)}(\mathbb{E}[Y^2e^{\theta Y}]\mathbb{E}[e^{\theta Y}]-\mathbb{E}[Ye^{\theta Y}]^2)}{\sigma^2 + a^2 \beta e^{ay_c(\theta)}\mathbb{E}[e^{\theta Y}]}
\end{equation}
By Cauchy-Schwarz inequality we can get $G''(\theta)>0$, so $G(\theta)$ is convex, and $G'(\theta)$ is increasing. Further, with the fact that ${\lim\limits_{\theta \to -\infty}}y_c(\theta)=\frac{b}{\sigma^2}$ from (\ref{yc equation}), we can easily see that ${\lim\limits_{\theta \to -\infty}}G'(\theta)<0$,
so we just need to judge whether $\theta_c$ exist such that $G'(\theta_c)=0$. We discusses in two cases.

{\bf Case one}: ${\lim\limits_{\theta \to +\infty}}G'(\theta)\le 0$, in this case, only ${\lim\limits_{\theta \to +\infty}}G(\theta)<0$ can make the function has a solution,
and the unique solution $\theta_{\min}$ satisfies
\begin{equation}\label{yc}
\left\{\begin{array}{r@{}l@{\qquad}l}
&\theta=\frac{2(a \beta +\sigma^2)y_c+\alpha \sigma^2 y_c^2-2a \beta+2b}{a \beta \mu_Y},\\
&-b+\sigma^2y_c+a\beta e^{a y_c}E[e^{\theta Y}]=0.
\end{array}\right.
\end{equation}

\begin{figure}[H]
  \centering
  \includegraphics[scale=0.5]{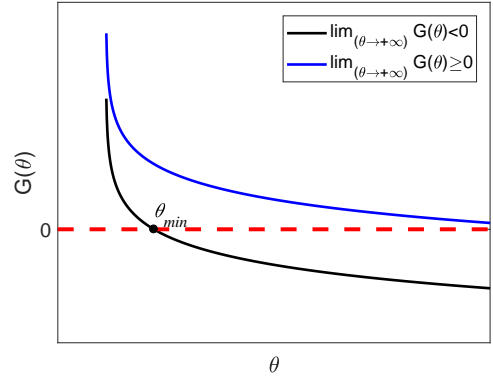}\\
  \caption{Case one}
\end{figure}

{\bf Case two}: ${\lim\limits_{\theta \to +\infty}}G'(\theta)>0$, in this case $G'(\theta_c)=0$ has a unique solution $\theta_c$. And $G(\theta_c)$ is the minimum of $G(\theta)$. We write $\theta_{\min}$ and $\theta_{\max}$
for the two solutions for equation
\begin{equation}\label{theta-min-max}
\left\{\begin{array}{r@{}l@{\qquad}l}
&G(\theta)=-b y_c(\theta)+\frac{\sigma^2}{2} y_c^2(\theta) + \beta e^{a y_c(\theta)} \mathbb{E}[e^{\theta Y}]-\beta(\theta \mu_Y +1)=0,\\
&-b+\sigma^2y_c+\alpha\beta e^{\alpha y_c}E(e^{\theta Y})=0.
\end{array}\right.
\end{equation}

\begin{figure}[H]
  \centering
  \includegraphics[scale=0.5]{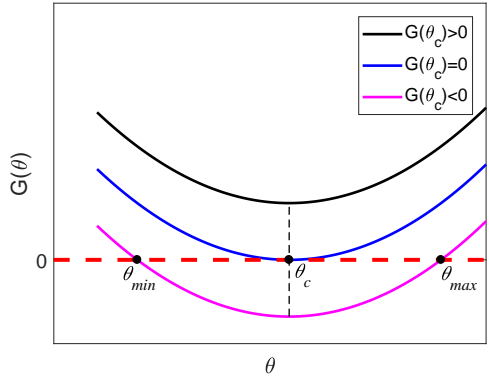}\\
  \caption{Case two}
\end{figure}

\begin{itemize}
 \item[1.] If ${\lim\limits_{\theta \to +\infty}}G(\theta)<0$, then when $\theta\ge\theta_{\min}$ in (\ref{yc}), $G(\theta)\le 0$.
 \item[2.] If ${\lim\limits_{\theta \to +\infty}}G'(\theta)>0$, then when $\theta\in [\theta_{\min}, \theta_{\max}]$, $G(\theta)\le 0$.
\end{itemize}

Therefore for $\theta\in [\theta_{\rm min},\theta_{\rm max}]$ (in {\bf Case one}, $\theta_{\rm max}\longrightarrow +\infty$), we have
$$\Lambda(\theta)=\mathop{\lim}\limits_{t\rightarrow\infty}
\frac{1}{t}\log\mathbb{E}[e^{\theta X_t}]=\frac{1}{2}\sigma^2\theta^2-\left(\frac{1}{2}\sigma^2+\mu_Y\alpha\right)\theta+bcy(\theta)+\alpha\left(e^{ay(\theta)}
\mathbb{E}[e^{\theta Y}]-1\right).$$
When $\theta\notin [\theta_{\rm min},\theta_{\rm max}]$, this limit is $\infty$.

We are to check two conditions for G${\rm \ddot{a}}$rtner-Ellis theorem. The first condition is essential smoothness. By differentiating the equation (\ref{Gamma function})
with respect to $\theta$, that is when $\theta\rightarrow \theta_{\rm min(max)}$, then $y\rightarrow y_c$, and
$$\frac{\partial y}{\partial\theta}=\frac{\beta(\mu_Y-e^{ay}\mathbb{E}[Ye^{\theta Y}])}{-b+\sigma^2y+a\beta e^{ay}\mathbb{E}[e^{\theta Y}]}\rightarrow+\infty.$$
The second is $0\in [\theta_{\rm min},\theta_{\rm max}]$. As $[\theta_{\rm min},\theta_{\rm max}]$ is the range of $\theta$ such that equation (\ref{Gamma function}) has a solution of $y(\theta)$.
When $\theta=0$, the equation becomes
\begin{equation}\label{Gamma function 0}
\Gamma(y,0)=-by+\frac{1}{2}\sigma^2y^2+\beta e^{ay}-\beta=0.
\end{equation}
It is straightforward to see that $y=0$ is the solution, therefore $0\in [\theta_{\rm min},\theta_{\rm max}]$.

Upon applying
G${\rm \ddot{a}}$rtner-Ellis theorem (refer to \cite{DZ1998} for the definition of essential smoothness and statement of
G${\rm \ddot{a}}$rtner-Ellis theorem), $\mathbb{Q}(\frac{1}{t}X_t\in\cdot)$ satisfies
a large deviation principle with rate function
\begin{equation*}
I(x)=\mathop{\sup}\limits_{\theta\in\mathbb{R}}\left\{\theta x-\left(\frac{1}{2}\sigma^2\theta^2-\left(\frac{1}{2}\sigma^2+\mu_Y\alpha\right)\theta+bc y(\theta)+\alpha\left(e^{ay(\theta)}
\mathbb{E}[e^{\theta Y}]-1\right)\right)\right\}.
\end{equation*}
\end{proof}

\subsection{Asymptotics of implied volatility in large-maturity and large-strike regime}
${}\quad$ In this section, we use the rate function in the large deviation principle for $X_t$ to characterize the asymptotic behaviours of implied volatility in large-maturity and large-strike regime.

Consider an European call option with maturity $T$ and strike $K$ is given as
\begin{equation}
C(K,T):=D(T)\mathbb{E}\left[\left(S_T-K\right)^+\right],\nonumber
\end{equation}
where $S_T$ is the underlying stock price at maturity $T$ and $D(T)$ is the discount factor. One should notice the corresponding put option price $P(K,T)$ can be found straightforwardly using call-put parity.
$C(K,T)$ indicates the dependence on the maturity $T$ and strike $K$.
Let $F_0=\mathbb{E}S_T$ be the forward price of underlying stock.
For a given $F_0$, the log moneyness $k$ is related to strike by
\begin{equation}\label{moneyness}
k:= \log(K/F_0),
\end{equation}
so $K(k)=F_0e^k$ is the strike at log moneyness $k$.
The Black-Scholes implied volatility with log moneyness $k$ and at maturity $T$ is defined as $\sigma_{BS}(k,T)$ which uniquely solves
\begin{equation}\label{call implied volatility}
C(K(k),T)=C^{BS}(k,\sigma_{BS}(k,T)),
\end{equation}
where
\begin{align*}
C^{BS}(k,\sigma)=D(T)\left(F_0\Phi(d_+)-K(k)\Phi(d_-)\right) \quad\text{and}\quad d_{\pm}=\frac{-k}{\sigma\sqrt{T}}\pm\frac{\sigma\sqrt{T}}{2},
\end{align*}
and $\Phi$ is the cumulative distribution function of a standard normal distribution. Similarly, for a European put option, its implied volatility $\sigma_{BS}(k,T)$ uniquely solves
\begin{equation}\label{put implied volatility}
P(K(k),T) = P^{BS}(k,\sigma_{BS}(k,T)),
\end{equation}
where
\begin{align*}
P^{BS}(k,\sigma)=D(T)(K(k)\Phi(-d_-)-F_0\Phi(-d_+)).
\end{align*}

\begin{theorem}\label{lmls} In the joint regime of large-maturity, large-strike with $k=\log(K/S_{0})$
$(T\rightarrow\infty$, $|k|\rightarrow\infty)$, the implied volatility $\sigma_{\text{BS}}(k,T)$
approaches the limit
\begin{equation}\label{eqn:joint regime of large T and large K}
\lim_{T\rightarrow\infty}\sigma^{2}_{\text{BS}}(xT,T)=\sigma_{\infty}^{2}(x),
\end{equation}
where
\begin{equation} \label{eqn:approx-implied-vol}
\sigma_{\infty}^{2}(x)=
\begin{cases}
2(2I(x)-x-2\sqrt{I^{2}(x)-xI(x)}) & x\in(-\infty,x_{L})\cup(x_{R},\infty)
\\
2(2I(x)-x+2\sqrt{I^{2}(x)-xI(x)}) & x\in[x_{L},x_{R}]
\end{cases}
\end{equation}
where $I(x)$ is defined in (\ref{rate for affine X}) and
\begin{equation}
x_{L}=-\left(\frac{1}{2}\sigma^2+\mu_{Y}\alpha\right)+\left(bc+a\alpha\right)\frac{\beta\left(\mu_{Y}-
\mathbb{E}[Y]\right)}{a\beta-b} + \alpha \mathbb{E}[Y],
\end{equation}
and
\begin{equation}
x_{R}=\left(\frac{1}{2}\sigma^2-\mu_{Y}\mathbb{E}[e^{Y}]\alpha\right)+\left(bc+a\mathbb{E}[e^Y]\alpha
\right)\frac{\mathbb{E}[e^{Y}]\beta\left(\mu_{Y}-\mathbb{E}[\bar{Y}]\right)}{a\mathbb{E}[e^{Y}]\beta-b} + \mathbb{E}[e^Y] \alpha \mathbb{E}[\bar{Y}],
\end{equation}
where $\bar{Y}$ follows the probability distribution $\frac{e^{Y}}{\mathbb{E}[e^{Y}]}d\mathbb{Q}$.
\end{theorem}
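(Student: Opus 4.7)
The plan is to follow the Forde--Jacquier framework: use the large deviation principle for $X_T/T$ from Theorem \ref{LDP for X affine} to pin down the exponential decay rate of out-of-the-money option prices under the affine jump-diffusion model, and match this rate against the analogous Black--Scholes asymptotic to solve for $\sigma_\infty^2(x)$.

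First I would combine the LDP with an Esscher-type exponential tilt to show that, for $x>x_R:=\Lambda'(1)$,
\begin{equation*}
\lim_{T\to\infty}-\frac{1}{T}\log\mathbb{E}\bigl[(e^{X_T}-e^{xT})^+\bigr]=I(x)-x,
\end{equation*}
and for $x<x_L:=\Lambda'(0)$,
\begin{equation*}
\lim_{T\to\infty}-\frac{1}{T}\log\mathbb{E}\bigl[(e^{xT}-e^{X_T})^+\bigr]=I(x).
\end{equation*}
The upper bounds follow from Markov's inequality applied to exponential tilts by $\theta>1$ (respectively $\theta<0$); the matching lower bounds come from a change of measure that makes $\theta^{*}(x)=I'(x)$ the drift of the tilted process, with the essential smoothness and steepness of $\Lambda$ on $(\theta_{\min},\theta_{\max})$, already verified in the proof of Theorem \ref{LDP for X affine}, guaranteeing that the optimizer lies in the interior of the effective domain.

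Applying the same argument to the Black--Scholes model, whose log-price satisfies an LDP with rate function $(x+\sigma^{2}/2)^{2}/(2\sigma^{2})$, gives out-of-the-money decay rates $(x-\sigma^{2}/2)^{2}/(2\sigma^{2})$ for the call (when $x>\sigma^{2}/2$) and $(x+\sigma^{2}/2)^{2}/(2\sigma^{2})$ for the put (when $x<-\sigma^{2}/2$). Equating the affine rate with the Black--Scholes rate in the OTM regime reduces to the quadratic
\begin{equation*}
\sigma^{4}+4\sigma^{2}\bigl(x-2I(x)\bigr)+4x^{2}=0
\end{equation*}
in $\sigma^{2}=\sigma_\infty^{2}(x)$, whose two roots are $2\bigl(2I(x)-x\pm 2\sqrt{I^{2}(x)-xI(x)}\bigr)$. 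For $x\notin[x_L,x_R]$ the option is OTM so the decay rate is strictly positive, selecting the minus root; for $x\in[x_L,x_R]$ both call and put have positive asymptotic value, and matching the OTM side through put--call parity selects the plus root.

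Finally I would identify $x_L$ and $x_R$ with the explicit expressions in the statement by implicit differentiation of $\Gamma(y(\theta),\theta)=0$ from \eqref{Gamma function}. Equation \eqref{Gamma function 0} gives $y(0)=0$, and the martingale property $\mathbb{E}[e^{X_T}]=1$ (equivalent to $\Lambda(1)=0$) forces $y(1)=0$; evaluating $y'(\theta)=-\Gamma_\theta/\Gamma_y$ at these two points and applying the chain rule to the formula for $\Lambda(\theta)$ yields the closed forms for $\Lambda'(0)=x_L$ and $\Lambda'(1)=x_R$ asserted in the theorem. The main obstacle is the sharp lower bound in step one: the Markov upper bound is routine, but the matching lower bound requires subexponential control under the tilted measure, achieved by combining steepness of $\Lambda$ at $\theta_{\min}$ and $\theta_{\max}$ with a Bahadur--Rao-type refinement of the classical Cram\'er lower bound.
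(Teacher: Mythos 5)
Your proposal follows essentially the same route as the paper: the LDP of Theorem \ref{LDP for X affine} is turned into exponential decay rates for tail option prices, these are matched against the Black--Scholes rates, and the resulting quadratic in $\sigma_\infty^2(x)$ produces \eqref{eqn:approx-implied-vol}. The only structural differences are that the paper obtains the price asymptotics by passing explicitly to the share measure $d\bar{\mathbb{Q}}/d\mathbb{Q}=e^{X_t}$ and then quoting Corollaries 2.4 and 2.14 of Forde--Jacquier (with proofs sketched in its appendix), whereas you propose to re-derive those tail estimates by exponential tilting (Markov upper bound plus a Bahadur--Rao-type lower bound), and you identify $x_R=\Lambda'(1)$ rather than $\bar{\Lambda}'(0)$; these coincide in principle since $\mathbb{E}^{\bar{\mathbb{Q}}}[e^{\theta\bar{X}_t}]=\mathbb{E}^{\mathbb{Q}}[e^{(1+\theta)X_t}]$, i.e.\ $\bar{\Lambda}(\theta)=\Lambda(\theta+1)$.

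Two points need repair. First, your stated put asymptotics drop the strike factor: with strike $S_0e^{xT}$ one has $-\lim_{T\to\infty}\frac1T\log\mathbb{E}[(S_0e^{xT}-S_T)^+]=I(x)-x$ for $x\le x_L$, not $I(x)$, and correspondingly the Black--Scholes put rate is $(x-\sigma^2/2)^2/(2\sigma^2)$, not $(x+\sigma^2/2)^2/(2\sigma^2)$; your versions are valid only for the strike-normalized put $K^{-1}\mathbb{E}[(K-S_T)^+]$ (which is in fact what the paper's appendix uses). Because the same omission occurs on both sides, the matching equation $\sigma^4+4\sigma^2(x-2I(x))+4x^2=0$ and hence \eqref{eqn:approx-implied-vol} still come out right, but the intermediate limits as written are wrong and should be corrected or restated for normalized prices. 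Second, two steps are asserted rather than argued. In the middle regime $x\in[x_L,x_R]$ the decaying quantity is not an OTM option recovered by put--call parity but $S_0-\mathbb{E}[(S_T-S_0e^{xT})^+]$, whose rate is again $I(x)-x$ (the paper's \eqref{eqn:large-strike-large-maturity-rate-function}); the root selection then comes from requiring the matching Black--Scholes volatility to place $x$ in the corresponding regime (inside $[x_L,x_R]$ one needs $\sigma^2\ge 2|x|$, hence the larger root; outside, the smaller root), so spell this out rather than appeal to parity. Likewise, the claim that evaluating $\Lambda'(1)$ ``yields the closed form for $x_R$'' must actually be carried out: using $y(1)=0$ one gets $\Lambda'(1)=\tfrac12\sigma^2-\mu_Y\alpha+(bc+a\alpha\mathbb{E}[e^Y])\,\beta(\mu_Y-\mathbb{E}[Ye^Y])/(a\beta\mathbb{E}[e^Y]-b)+\alpha\mathbb{E}[Ye^Y]$, which is not term-by-term the displayed expression for $x_R$ (the paper computes $x_R=\bar{\Lambda}'(0)$ after rewriting the model under the share measure with parameters $\alpha\mathbb{E}[e^Y]$, $\beta\mathbb{E}[e^Y]$ and jump law $\bar{Y}$), so this reconciliation cannot be left implicit.
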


\begin{proof}
First, let us give a more explicit expression for $I(x)$ in (\ref{rate for affine X}).
Note that
\begin{equation*}
I(x)=\theta^{\ast}x-\Lambda(\theta^{\ast}),
\end{equation*}
Let $\frac{d}{d\theta}I(x)=0$, where $x=\Lambda'(\theta^{\ast})$ so that
\begin{equation*}
\sigma^{2}\theta^{\ast}-\left(\frac{1}{2}\sigma^{2}+\mu_{Y}\alpha\right)+bcD'(\theta^{\ast})+\alpha D'(\theta^{\ast})e^{aD}\mathbb{E}[e^{\theta^{\ast}Y}]+\alpha\mathbb{E}[Ye^{aD+\theta^{\ast}Y}]=x,
\end{equation*}
which gives that
\begin{equation*}
D'(\theta^{\ast})=\frac{x+\frac{1}{2}\sigma^{2}+\mu_{Y}\alpha-\theta^{\ast}\sigma^{2}-\alpha\mathbb{E}[Ye^{aD+\theta^{\ast}Y}]}{bc+\alpha e^{aD}\mathbb{E}[e^{\theta^{\ast}Y}]}.
\end{equation*}
On the other hand, take the derivative of equation $\Gamma(D(\theta),\theta)=0$ on $\theta$,
\begin{equation*}
-bD'(\theta)+\sigma^{2}D(\theta)D'(\theta)+\beta\mathbb{E}\left[(aD'(\theta)+Y)e^{aD(\theta)+\theta Y}\right]-\mu_{Y}\beta=0,
\end{equation*}
that is
\begin{equation*}
D'(\theta)\left(\sigma^{2}D(\theta)-b+a\beta\mathbb{E}[e^{aD(\theta)+\theta Y}]\right)=\mu_{Y}\beta-\beta\mathbb{E}[Ye^{aD(\theta)+\theta Y}].
\end{equation*}
Therefore we can solve for $\theta^{\ast}$ and $D(\theta^{\ast})$ from the following equations:
\begin{equation}\label{Dtheta}
\left\{\begin{array}{r@{}l@{\qquad}l}
&\frac{x+\frac{1}{2}\sigma^{2}+\mu_{Y}\alpha-\theta^{\ast}\sigma^{2}-\alpha\mathbb{E}[Ye^{aD+\theta^{\ast}Y}]}{bc+\alpha e^{aD}\mathbb{E}[e^{\theta^{\ast}Y}]}\left(\sigma^{2}D(\theta^{\ast})-b+a\beta\mathbb{E}[e^{aD(\theta^{\ast})+\theta^{\ast}Y}]\right)\\
&\hspace{ 20em}=\beta\left(\mu_{Y}-\mathbb{E}[Ye^{aD(\theta^{\ast})+\theta^{\ast} Y}]\right)\\
&-bD(\theta^{\ast})+\frac{1}{2}\sigma^2D(\theta^{\ast})^2+\beta\left(\mathbb{E}[e^{aD(\theta^{\ast})+\theta^{\ast}Y}]-1\right)-\theta^{\ast}\mu_{Y}\beta=0.
\end{array}\right.
\end{equation}

Second, let us define the share measure $\bar{\mathbb{Q}}$ as
\begin{equation}
\frac{d\bar{\mathbb{Q}}}{d\mathbb{Q}}\bigg|_{\mathcal{F}_{t}}=\frac{S_{t}}{S_{0}}=e^{X_{t}}.
\end{equation}
Note that
\begin{align*}
\frac{S_{t}}{S_{0}}
&=e^{-\frac{1}{2}\sigma^{2}t+\sigma W_{t}^{\mathbb{Q}}-\mu_{Y}\int_{0}^{t}\lambda^{N}_{s}ds+\sum_{i=1}^{N_{t}}Y_{i}}
\\
&=e^{-\frac{1}{2}\sigma^{2}t+\sigma W_{t}^{\mathbb{Q}}}
\cdot\prod_{i=1}^{N_{t}}\frac{e^{Y_{i}}}{\mathbb{E}[e^{Y}]}
\cdot e^{\log\mathbb{E}[e^{Y}]N_{t}-\mu_{Y}\int_{0}^{t}\lambda^{N}_{s}ds}.
\end{align*}
Thus, under the share measure $\bar{\mathbb{Q}}$,
\begin{equation}
\bar{X}_{t}=\frac{1}{2}\sigma^{2}t+\sigma W_{t}^{\bar{\mathbb{Q}}}
-\mu_{Y}\int_{0}^{t}\bar{\lambda}_{s}^{\bar{N}}ds+\sum_{i=1}^{\bar{N}_{t}}\bar{Y}_{i},
\end{equation}
where $\bar{Y}_{i}$ are i.i.d. and according to $\bar{\mathbb{Q}}$
so that it has the probability distribution
\begin{equation*}
\frac{e^{Y}}{\mathbb{E}[e^{Y}]}d\mathbb{Q}
\end{equation*}
and $\bar{N}_{t}$ is an affine point process with intensity
\begin{equation*}
\bar{\lambda}^{\bar{N}}_{t}=\mathbb{E}[e^{Y}]\lambda^{N}_{t}.
\end{equation*}

Thus, $\bar{\mathbb{Q}}(\frac{1}{t}\bar{X}_{t}\in\cdot)$ satisfies
a large deviation principle with
\begin{equation*}
\bar{I}(x):=\sup_{\theta\in\mathbb{R}}\{\theta x-\bar{\Lambda}(\theta)\},
\end{equation*}
here
\begin{equation*}
\bar{\Lambda}(\theta):=\lim_{t\rightarrow\infty}\frac{1}{t}\log\mathbb{E}[e^{\theta\bar{X}_{t}}]
=\frac{1}{2}\sigma^2\theta^2+\left(\frac{1}{2}\sigma^2-\mu_{Y}\mathbb{E}[e^{Y}]\alpha\right)\theta+bc\bar{D}(\theta)
+\mathbb{E}[e^{Y}]\alpha\left(e^{a\bar{D}(\theta)}
\mathbb{E}[e^{\theta\bar{Y}}]-1\right),
\end{equation*}
where $\bar{D}(\theta)$ is the smaller solution of the equation
\begin{equation} \label{Dbartheta}
-b\bar{D}(\theta)+\frac{1}{2}\sigma^2\bar{D}(\theta)^2+\mathbb{E}[e^Y]\beta\left(\mathbb{E}[e^{a\bar{D}(\theta)+\theta\bar{Y}}]-1\right)
-\theta\mu_{Y}\mathbb{E}[e^{Y}]\beta=0.
\end{equation}

As a corollary, $\bar{\mathbb{Q}}(-\frac{1}{t}\bar{X}_{t}\in\cdot)$ satisfies
a large deviation principle with the rate function $\bar{I}(-x)$. Moreover, for any $x\in\mathbb{R}$ and for any sufficiently small $\delta>0$,
\begin{equation*}
\bar{\mathbb{Q}}\left(x-\delta<\frac{\bar{X}_{t}}{t}<x+\delta\right)
=\mathbb{E}\left[e^{X_{t}}1_{x-\delta<\frac{X_{t}}{t}<x+\delta}\right],
\end{equation*}
which implies that
\begin{equation*}
\bar{I}(x)=I(x)-x.
\end{equation*}

Third, following the similar lines in Corollary 2.4 in \cite{FordeJacquier2011}, we have
\begin{equation} \label{eqn:large-strike-large-maturity-rate-function}
I(x)-x
=
\begin{cases}
-\lim_{T\rightarrow\infty}\frac{1}{T}\log\mathbb{E}[(S_{T}-S_{0}e^{xT})^{+}] &\text{for $x\geq x_{R}$},
\\
-\lim_{T\rightarrow\infty}\frac{1}{T}\log(S_{0}-\mathbb{E}[(S_{T}-S_{0}e^{xT})^{+}]) &\text{for $x_{L}\leq x\leq x_{R}$},
\\
-\lim_{T\rightarrow\infty}\frac{1}{T}\log\mathbb{E}[(S_{0}e^{xT}-S_{T})^{+}] &\text{for $x\leq x_{L}$},
\end{cases}
\end{equation}
from which we can compute that
\begin{equation}
x_{L}=\Lambda'(0),
\qquad
x_{R}=\bar{\Lambda}'(0).
\end{equation}
Differentiating $\Lambda(\theta)$ with respect to $\theta$, we get
\begin{equation}\label{Lambda}
\Lambda'(\theta)=\sigma^2\theta-\left(\frac{1}{2}\sigma^2+\mu_{Y}\alpha\right)+bcD'(\theta)+\alpha e^{aD(\theta)}\left( aD'(\theta)\mathbb{E}[e^{\theta Y}] + \mathbb{E}[Ye^{\theta Y}] \right).
\end{equation}
From equation (\ref{Dtheta}), we have
\begin{equation*}
D'(\theta)=\frac{\beta\left(\mu_{Y}-\mathbb{E}[Ye^{aD(\theta)+\theta Y}]\right)}{\sigma^{2}D(\theta)-b+a\beta\mathbb{E}[e^{aD(\theta)+\theta Y}]},
\end{equation*}
and $D(0)=0$ from $\mathbb{E}[e^{aD}]=1$, so
\begin{equation}\label{D0}
D'(0)=\frac{\beta\left(\mu_{Y}-\mathbb{E}[Y]\right)}{a\beta-b}.
\end{equation}
Plugging equation (\ref{D0}) into equation (\ref{Lambda}), we have
\begin{equation*}
x_{L}=\Lambda'(0)=-\left(\frac{1}{2}\sigma^2+\mu_{Y}\alpha\right)+\left(bc+a\alpha\right)\frac{\beta\left(\mu_{Y}-
\mathbb{E}[Y]\right)}{a\beta-b} + \alpha \mathbb{E}[Y].
\end{equation*}
Similarly, differentiating $\bar{\Lambda}(\theta)$ w.r.t. $\theta$,
\begin{equation}\label{barLambda}
\bar{\Lambda}'(\theta)=\sigma^2\theta+\left(\frac{1}{2}\sigma^2-\mu_{Y}\mathbb{E}[e^{Y}]\alpha\right)+bc\bar{D}'(\theta)+\mathbb{E}[e^Y]\alpha e^{a \bar{D}(\theta)}\left(a \bar{D}'(\theta) \mathbb{E}[e^{\theta \bar{Y}}] + \mathbb{E}[\bar{Y} e^{\theta \bar{Y}}]\right).
\end{equation}
In addition, from equation (\ref{Dbartheta}) we have
\begin{equation*}
\bar{D}'(\theta)=\frac{\beta\mathbb{E}[e^{Y}]\left(\mu_{Y}-\mathbb{E}[\bar{Y}e^{a\bar{D}(\theta)+\theta\bar{Y}}]\right)}{\sigma^{2}
\bar{D}(\theta)-b+a\beta\mathbb{E}[e^{Y}]\mathbb{E}[e^{a\bar{D}(\theta)+\theta\bar{Y}}]},
\end{equation*}
and $\bar{D}(0)=0$ from $\mathbb{E}[e^{a\bar{D}}]=1$, so
\begin{equation} \label{barD0}
\bar{D}'(0)=\frac{\beta\mathbb{E}[e^{Y}]\left(\mu_{Y}-\mathbb{E}[\bar{Y}]\right)}{a\beta\mathbb{E}[e^{Y}]-b}.
\end{equation}
Plugging equation (\ref{barD0}) into equation (\ref{barLambda}), we have
\begin{equation*}
x_{R}=\bar{\Lambda}'(0)=\left(\frac{1}{2}\sigma^2-\mu_{Y}\mathbb{E}[e^{Y}]\alpha\right)+\left(bc+a\mathbb{E}[e^Y]\alpha
\right)\frac{\mathbb{E}[e^{Y}]\beta\left(\mu_{Y}-\mathbb{E}[\bar{Y}]\right)}{a\mathbb{E}[e^{Y}]\beta-b} + \mathbb{E}[e^Y] \alpha \mathbb{E}[\bar{Y}].
\end{equation*}

In summary,
\begin{equation*}
x_{L}=\Lambda'(0)=-\left(\frac{1}{2}\sigma^2+\mu_{Y}\alpha\right)+\left(bc+a\alpha\right)\frac{\beta\left(\mu_{Y}-
\mathbb{E}[Y]\right)}{a\beta-b} + \alpha \mathbb{E}[Y]
\end{equation*}
and
\begin{equation*}
x_{R}=\bar{\Lambda}'(0)=\left(\frac{1}{2}\sigma^2-\mu_{Y}\mathbb{E}[e^{Y}]\alpha\right)+\left(bc+a\mathbb{E}[e^Y]\alpha
\right)\frac{\mathbb{E}[e^{Y}]\beta\left(\mu_{Y}-\mathbb{E}[\bar{Y}]\right)}{a\mathbb{E}[e^{Y}]\beta-b} + \mathbb{E}[e^Y] \alpha \mathbb{E}[\bar{Y}].
\end{equation*}

Fourth, it follows from Corollary 2.14 in \cite{FordeJacquier2011} that in the joint regime of large-maturity, large-strike with $k=\log(K/S_{0})$
$(T\rightarrow\infty$, $|k|\rightarrow\infty$), the implied volatility $\sigma_{\text{BS}}(k,T)$
approaches the limit
\begin{equation*}
\lim_{T\rightarrow\infty}\sigma^{2}_{\text{BS}}(xT,T)=\sigma_{\infty}^{2}(x),
\end{equation*}
where
\begin{equation*}
\sigma_{\infty}^{2}(x)=
\begin{cases}
2(2I(x)-x-2\sqrt{I^{2}(x)-xI(x)}) & x\in(-\infty,x_{L})\cup(x_{R},\infty)
\\
2(2I(x)-x+2\sqrt{I^{2}(x)-xI(x)}) & x\in[x_{L},x_{R}]
\end{cases}.
\end{equation*}
\end{proof}
\subsection{Asymptotics of implied volatility in fixed-maturity, large-strike and small-strike regimes}

${}\quad$ In this section, we apply Lee's moment formula (\cite{Lee2004}) to derive the asymptotics
for the Black-Scholes implied volatility in fixed-maturity, large-strike ($K\rightarrow\infty$) and small-strike ($K\rightarrow 0$) regimes.

Define
\begin{equation}\label{p tilde}
\tilde{p}:=\sup\left\{p:\mathbb{E}^{\mathbb{Q}}[S_{T}^{1+p}]<\infty\right\},
\end{equation}
and
\begin{equation}\label{q tilde}
\tilde{q}:=\sup\left\{q:\mathbb{E}^{\mathbb{Q}}[S_{T}^{-q}]<\infty\right\}.
\end{equation}

The following lemma gives an explicit formula relating the right-hand (or large-$K$ or positive-$x$) tail slope and the left-hand (or small-$K$ or negative-$x$) tail slope to how many finite moments the
underlying possesses.

\begin{lemma}\label{moment formula}(\cite{Lee2004}) For $k=\log(K/S_{0})$. Let $\beta_R:=\mathop{\limsup}\limits_{k\rightarrow+\infty}\frac{\sigma^{2}_{\text{BS}}(k)}{|k|/T}$ and $\beta_L:=\mathop{\limsup}\limits_{k\rightarrow-\infty}\frac{\sigma^{2}_{\text{BS}}(k)}{|k|/T}$.
Then $\beta_R\in[0,2]$ and $\beta_L\in[0,2]$ and
\begin{align}
&\tilde{p}=\frac{1}{2\beta_R}+\frac{\beta_R}{8}-\frac{1}{2}\nonumber,\\
&\tilde{q}=\frac{1}{2\beta_L}+\frac{\beta_L}{8}-\frac{1}{2}\nonumber,
\end{align}
where $\frac{1}{0}:=\infty$. Equivalently,
\begin{align}
&\beta_R=2-4(\sqrt{\tilde{p}^2+\tilde{p}}-\tilde{p})\nonumber,\\
&\beta_L=2-4(\sqrt{\tilde{q}^2+\tilde{q}}-\tilde{q})\nonumber,
\end{align}
where the right-hand expression is to be read as zero, in the case $\tilde{p}=\infty$ or $\tilde{q}=\infty$.
\end{lemma}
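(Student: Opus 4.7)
The plan is to set up a two-link chain: the moment exponents $\tilde p,\tilde q$ are encoded in the power-law decay of the call (resp.\ put) price in $K$, and those decays are in turn encoded in the implied-volatility slopes $\beta_R,\beta_L$ through the Black--Scholes asymptotics for deep out-of-the-money options. Composing the two links expresses $\tilde p$ as an explicit function of $\beta_R$ (and $\tilde q$ of $\beta_L$), which I then algebraically invert to recover the stated formulas.

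First, I would use the Carr--Madan static-replication identity
\begin{equation*}
s^{1+p}=p(1+p)\int_{0}^{\infty}(s-K)^{+}K^{p-1}\,dK,\qquad s\ge 0,\;p>0,
\end{equation*}
which follows by differentiating twice in $s$. Taking expectations gives $\mathbb{E}[S_T^{1+p}]=\frac{p(1+p)}{D(T)}\int_{0}^{\infty}K^{p-1}C(K,T)\,dK$, so $\tilde p$ equals the supremum of $p>0$ for which $\int^{\infty}K^{p-1}C(K,T)\,dK$ converges. The symmetric identity $s^{-q}=q(1+q)\int_{0}^{\infty}(K-s)^{+}K^{-q-2}\,dK$ yields the analogous statement for $\tilde q$ in terms of the small-$K$ tail of the put price.

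Next, fix $T$ and write $v=\sigma_{BS}(k,T)\sqrt{T}$, so $d_{\pm}=-k/v\pm v/2$. In the regime $k\to+\infty$ with $v^{2}/k\to\beta_R\in(0,2)$, both $d_{\pm}\to-\infty$; moreover $F_{0}\phi(d_{+})=K(k)\phi(d_{-})$ by the elementary identity $d_{-}^{2}/2-d_{+}^{2}/2=k$. The Mills-ratio estimate $\Phi(-z)\sim\phi(z)/z$ then yields
\begin{equation*}
C^{BS}(k,v)\;\sim\;K(k)\,\phi(d_{-})\,\frac{v}{|d_{+}|\,|d_{-}|}\;=\;F_{0}\,\exp\!\left(k\!\left[\tfrac{1}{2}-\tfrac{1}{2\beta_R}-\tfrac{\beta_R}{8}\right]\right)\cdot R(k),
\end{equation*}
where $R(k)$ is polynomially bounded in $k$. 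Combined with the change of variables $K=F_{0}e^{k}$ in the tail integral, convergence of $\int^{\infty}K^{p-1}C(K,T)\,dK$ is equivalent to $p<\tfrac{1}{2\beta_R}+\tfrac{\beta_R}{8}-\tfrac{1}{2}$, yielding $\tilde p=\tfrac{1}{2\beta_R}+\tfrac{\beta_R}{8}-\tfrac{1}{2}$. A parallel computation for $P^{BS}$ as $k\to-\infty$ gives $\tilde q=\tfrac{1}{2\beta_L}+\tfrac{\beta_L}{8}-\tfrac{1}{2}$. Solving the quadratic $\beta_R^{2}-8(\tilde p+\tfrac{1}{2})\beta_R+4=0$ for its root in $[0,2]$ produces $\beta_R=2-4(\sqrt{\tilde p^{2}+\tilde p}-\tilde p)$, and symmetrically for $\beta_L$.

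The main obstacle is that the computation above is cleanest when $\sigma_{BS}^{2}(k,T)T/|k|\to\beta_R$, whereas the hypothesis only supplies a $\limsup$. To close the gap I would argue by contradiction along subsequences where the $\limsup$ is attained, using monotonicity of $C^{BS}$ in $\sigma$: a value of $\beta_R$ strictly smaller than $(2\tilde p+1)^{-1}+(\tilde p+\tfrac12)$-calibrated threshold forces a faster exponential decay of $C(K,T)$ than the true decay permitted by $\tilde p$, and dually in the opposite direction. This subsequence-and-monotonicity passage from a pointwise $\limsup$ to a two-sided integral estimate is precisely the technical core of Lee's original proof; the algebraic inversion that follows is routine.
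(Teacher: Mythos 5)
The paper contains no proof of this lemma: it is quoted directly from Lee (2004) (\cite{Lee2004}) and used as a black box in Theorem \ref{fmlsss}, so there is no in-paper argument to compare yours against. What you have written is, in outline, a faithful reconstruction of Lee's own proof, and the computational ingredients all check out: the representation $s^{1+p}=p(1+p)\int_0^\infty (s-K)^+K^{p-1}\,dK$ (and its put analogue) correctly identifies $\tilde p$ with the convergence threshold of $\int^\infty K^{p-1}C(K,T)\,dK$; the identity $F_0\phi(d_+)=K(k)\phi(d_-)$ and the Mills-ratio estimate do give $C^{BS}\sim F_0\,e^{-k f(\beta)}\cdot(\text{polynomial factor})$ with $f(\beta)=\tfrac{1}{2\beta}+\tfrac{\beta}{8}-\tfrac12$ when $\sigma_{BS}^2T/k\to\beta\in(0,2)$; and the quadratic inversion selecting the root in $[0,2]$ is right.

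As a standalone proof, however, two points remain open beyond what you acknowledge. First, the $\limsup$ issue is asymmetric: the inequality $\tilde p\ge f(\beta_R)$ does follow cleanly from monotonicity of $C^{BS}$ in $\sigma$ (for any $\beta>\beta_R$ one eventually has $C(K)\le C^{BS}(k,\sqrt{\beta k/T})$, then let $\beta\downarrow\beta_R$), but the reverse inequality $\tilde p\le f(\beta_R)$ cannot be obtained from the Carr--Madan integral, since a lower bound on $C(K)$ along a subsequence does not make that integral diverge. Lee closes this direction with the elementary bound $C(K)\le D(T)\,\mathbb{E}[S_T^{1+p}]\,K^{-p}$, valid whenever the moment is finite (because $(s-K)^+\le s^{1+p}K^{-p}$), which along the subsequence attaining the $\limsup$ forces $p\le f(\beta_R)$; your description of the contradiction (``a smaller $\beta_R$ forces a faster decay than permitted by $\tilde p$'') has the roles of the two directions reversed and should be replaced by this bound. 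Second, the assertions $\beta_R,\beta_L\in[0,2]$ and the boundary cases $\beta_R\in\{0,2\}$, $\tilde p\in\{0,\infty\}$ are not covered by your expansion (which assumes $\beta_R\in(0,2)$); e.g.\ $\beta_R\le 2$ needs the separate observation that $C(K)\to0$ as $K\to\infty$ while $C^{BS}(k,\sqrt{\beta k/T})\to F_0$ for $\beta>2$. Since the paper itself merely cites Lee, your sketch is adequate as an outline of that citation, but these are the places where it is not yet a complete proof.
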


Applying Lee's moment formula, we obtain the following results for our model:
\begin{theorem} \label{fmlsss} In the joint regime of fixed-maturity, large-strike (small-strike) with $k=\log(K/S_{0})$
($|k|\rightarrow\infty)$, the implied volatility $\sigma_{\text{BS}}(k,T)$
approaches the limit
\begin{equation}\label{eqn:joint regime of fixed T}
\begin{aligned}
\mathop{\limsup}\limits_{k\rightarrow+\infty}\frac{\sigma^{2}_{\text{BS}}(k,T)}{|k|/T}
&=2-4(\sqrt{\tilde{p}^2+\tilde{p}}-\tilde{p}), \quad ({\rm large~strike}),\\
\mathop{\limsup}\limits_{k\rightarrow-\infty}\frac{\sigma^{2}_{\text{BS}}(k,T)}{|k|/T}
&=2-4(\sqrt{\tilde{q}^2+\tilde{q}}-\tilde{q}),  \quad ({\rm small~strike}),
\end{aligned}
\end{equation}
where $\tilde{p}$ and $\tilde{q}$ are defined via
\begin{equation*}
\int_0^\infty\frac{d\bar{D}}{H(\bar{D};\tilde{p}-1)}=T, \qquad
\int_0^\infty\frac{d\bar{D}}{H(\bar{D};-\tilde{q})}=T,
\end{equation*}
and
$$H(\bar{D};p):=-b\bar{D}+\frac{1}{2}\sigma^2\bar{D}^2+\beta\int_{\mathbb{R}}(e^{\bar{D}a+p y}-1)Q(dy)-p\mu_Y\beta.$$
\end{theorem}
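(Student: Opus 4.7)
The plan is to apply Lee's moment formula (Lemma \ref{moment formula}) and then read off the critical moments $\tilde{p}$ and $\tilde{q}$ from the moment-generating-function ODE derived in Lemma \ref{mgf_X}. Since $S_T=S_0e^{X_T}$, finiteness of $\mathbb{E}^{\mathbb{Q}}[S_T^{s}]$ is equivalent to finiteness of $\mathbb{E}[e^{sX_T}]$, so the theorem reduces to determining the critical exponent $s$ at which the MGF just explodes at horizon $T$.

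To this end I would set $(\theta_1,\theta_2,\theta_3)=(-s\mu_Y\beta,0,s)$ in Lemma \ref{mgf_X}, which rewrites the ODE for $D(\cdot;s)$ in the autonomous form
\begin{equation*}
D'(t)=H(D(t);s),\qquad D(0)=0,
\end{equation*}
with $H$ exactly as defined in the statement. The companion function $F(T;s)$ is a continuous integral of $D(\cdot;s)$, hence remains finite as long as $D(\cdot;s)$ does, and blows up precisely when $D(\cdot;s)$ does. Hence $\mathbb{E}[e^{sX_T}]<\infty$ iff the trajectory of the Riccati-type ODE above stays finite on $[0,T]$.

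Next I would analyze the blow-up time. As in the proof of Theorem \ref{LDP for X affine}, $H(\cdot;s)$ is convex in its first argument and dominated at infinity by the $\tfrac12\sigma^2 D^2$ and exponential $a\beta e^{aD}\mathbb{E}[e^{sY}]$ terms. For the critical exponents relevant to the right tail (large $s$) and the left tail (sufficiently negative $s$), $H(\cdot;s)$ is strictly positive on $[0,\infty)$ so that separation of variables gives the explicit blow-up time
\begin{equation*}
T^{\ast}(s)=\int_{0}^{\infty}\frac{dD}{H(D;s)}.
\end{equation*}
Therefore the critical exponents are characterized by $T^{\ast}(s)=T$. Translating back into the normalization of Lemma \ref{moment formula}, with $s$ taken to be the exponent on $S_T$ that sits on the boundary between finite and infinite moments, yields the two integral equations in the statement: the large-strike side determines $\tilde{p}$ and the small-strike side determines $\tilde{q}$. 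Inserting these into Lee's explicit formulas $\beta_R=2-4(\sqrt{\tilde{p}^2+\tilde{p}}-\tilde{p})$ and $\beta_L=2-4(\sqrt{\tilde{q}^2+\tilde{q}}-\tilde{q})$ from Lemma \ref{moment formula} produces the claimed limiting slopes of $\sigma_{\text{BS}}^2(k,T)/(|k|/T)$.

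The main technical obstacle is making rigorous the equivalence \emph{finiteness of the MGF $\Leftrightarrow$ non-explosion of the ODE up to time $T$}. One must verify that $H(\cdot;s)>0$ along the entire relevant trajectory (using convexity of $H$ in $D$ together with the sign of $H(0;s)$ at the critical $s$), ensure enough exponential integrability of the jump distribution $Q(dy)$ so that $\mathbb{E}[e^{sY}]$ appearing in $H$ is finite at the critical exponents, and handle the boundary case $T^{\ast}(s)=T$ by a monotonicity/continuity argument in $s$ so that the critical $s$ indeed equals the supremum in \eqref{p tilde}--\eqref{q tilde}. A minor bookkeeping point is matching the sign and shift conventions: since $S_T^{1+p}$ corresponds to exponent $1+p$ on $X_T$ and $S_T^{-q}$ to exponent $-q$, the two integral equations arise from feeding the appropriate shifted exponent into $H(\cdot\,;\cdot)$.
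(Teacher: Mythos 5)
Your proposal matches the paper's proof essentially step for step: both invoke Lee's moment formula (Lemma \ref{moment formula}) and identify the critical moments by recasting the exponent $\bar{D}$ of the moment generating function as the solution of the autonomous Riccati-type ODE $\bar{D}'(t)=H(\bar{D};\cdot)$ with $\bar{D}(0)=0$, whose explosion time is computed by separation of variables, yielding $\int_0^\infty d\bar{D}/H(\bar{D};\cdot)=T$ at the critical exponent. The technical points you flag (finiteness of the MGF $\Leftrightarrow$ non-explosion of $\bar{D}$ since $\bar{F}$ is determined by $\bar{D}$, positivity of $H$ along the trajectory, and the $\pm 1$ shift between the exponent on $S_T$ and the exponent on $X_T$) are exactly the steps the paper treats informally, so there is no gap relative to the published argument.
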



\begin{proof} Let us determine the $\tilde{p}$ and $\tilde{q}$ in (\ref{p tilde}) and (\ref{q tilde}) for $S_T$ in (\ref{affine point dynamics}). Recall that $\tilde{p}+1$ is the largest $p$ such that $\mathbb{E}[e^{pX_{T}}]<\infty$. From (\ref{moment gen_X}), we know
$$\mathbb{E}[e^{pX_{T}}]=e^{(-\frac{1}{2}p\sigma^{2}+\frac{1}{2}p^{2}\sigma^{2}-p\mu_Y\alpha)T
+\bar{D}(T;p)\lambda+pL+\bar{F}(T;p)},$$
where $\bar{D}(T;p)$ and $\bar{F}(T;p)$ solve a set of ODEs. According to the ODEs (\ref{DF}), we see $\bar{F}(T;p)$ is determined by $\bar{D}(T;p)$, so $\mathbb{E}[e^{pX_{T}}]<\infty \Longleftrightarrow \bar{D}(T;p)<\infty $ and the critical $\tilde{p}$ is the value of $p$ such that $\bar{D}(T;p)=\infty$.
Recall that $\bar{D}(t;p)$ solves the ODE in (\ref{DF})
\begin{equation}\label{D}
\left\{\begin{array}{r@{}l@{\qquad}l}
&\bar{D}'(t;p)=-b\bar{D}(t;p)+\frac{1}{2}\sigma^2\bar{D}^2(t;p)+\beta\int_{\mathbb{R}}(e^{\bar{D}(t;p)a+p y}-1)Q(dy)-p\mu_Y\beta:=H(\bar{D};p),\\
& \bar{D}(0;p)=0.
\end{array}\right.
\end{equation}
Define $\bar{D}'(t;p)=H(\bar{D};p)$,
\begin{equation}\label{intergral p}
\int_{\bar{D}(0;p)}^{\bar{D}(T;p)}\frac{d\bar{D}}{H(\bar{D};p)}=\int_0^Tdt=T.
\end{equation}
Therefore the critical $p=\tilde{p}-1$ satisfies
$\int_{0}^{\infty}d\bar{D}/H(\bar{D},p)=T$ as $\bar{D}(T;p)=\infty$.
For a given maturity $T$, we can find a $p$ which satisfies 
\begin{equation}\label{eqn:fixed_maturity_T_vs_p}
\int_0^\infty\frac{dx}{-bx+\frac{1}{2}\sigma^2x^2+\beta e^{ax}\mathbb{E}[e^{pY}]-\beta-p\mu_Y\beta}=T.
\end{equation}
Similarly, the critical $\tilde{q}=-q$ satisfies $\int_{0}^{\infty}d\bar{D}/H(\bar{D},q)=T$.
\end{proof}

\begin{remark}Numerical examples are provided in later sections to verify the existence of $p$ and $q$ values for different $T$'s in (\ref{eqn:fixed_maturity_T_vs_p}).
\end{remark}

\section{Numerical study}
${}\quad$ In this section, we provide some numerical study results.
The strength of the self-exciting process is controlled by $a$ in (\ref{APP}) and $\beta$ in the intensity function $\lambda^N_t$. Hence we vary $a$ and $\beta$ values to study how these two parameters affect the rate function and the asymptotic implied volatility. $a$ is chosen to be $0.05,~0.5$ and $1$ and $\beta$ is chosen to be $0.1,~0.25$ and $0.5$. For all numerical studies, we define the jump size $Y \sim \mathcal{N}(0,\sigma^{2})$. Other parameters are $b=1$, $c=0.05$, $\alpha=1$, $\sigma^2=0.1$ and $\delta^2=0.1$.

Figure \ref{fig:I_x_different_a} shows the rate function for selected $a$ values. One should notice as $a$ increases, the growth rate of $I(x)$ increases. This is expected as more rare events occur when $a$ increases, so the rate function $I(x)$ tends to be smaller. The right figure is the zoom-in of the left figure and it shows the minimums do not coincide. Rate function $\bar{I}(x)$ is shown in Figure \ref{fig:I_bar_x_different_a} and it has similar behaviors as $I(x)$ in Figure \ref{fig:I_x_different_a}. Figure \ref{fig:implied_vol_x_different_a} shows the asymptotic of implied volatility in the large-maturity and large-strike regime for different $a$ values. The affine point jump-diffusion model can capture the implied volatility smiles in this regime. \cite{FordeJacquier2011} found similar implied volatility smiles for the Heston model in the same regime. Consider the At-The-Money cases when $x=0$, the ATM volatility increases as $a$ increases. It is because as more rare events occur, the implied volatility is higher. Besides, the growth rate of the implied volatility into In-The-Money/Out-The-Money increases as $a$ increases.

Numerical results for different $\beta$ values are shown in Figures \ref{fig:I_x_different_beta}, \ref{fig:I_bar_x_different_beta} and \ref{fig:implied_vol_x_different_beta}. Because the parameter $\beta$ controls the strength of the self-exciting process intensity, so varying $\beta$ has similar effects as varying $a$.

\begin{figure}[H]
	\centering 
	\subfigure{
	\includegraphics[scale=0.35]{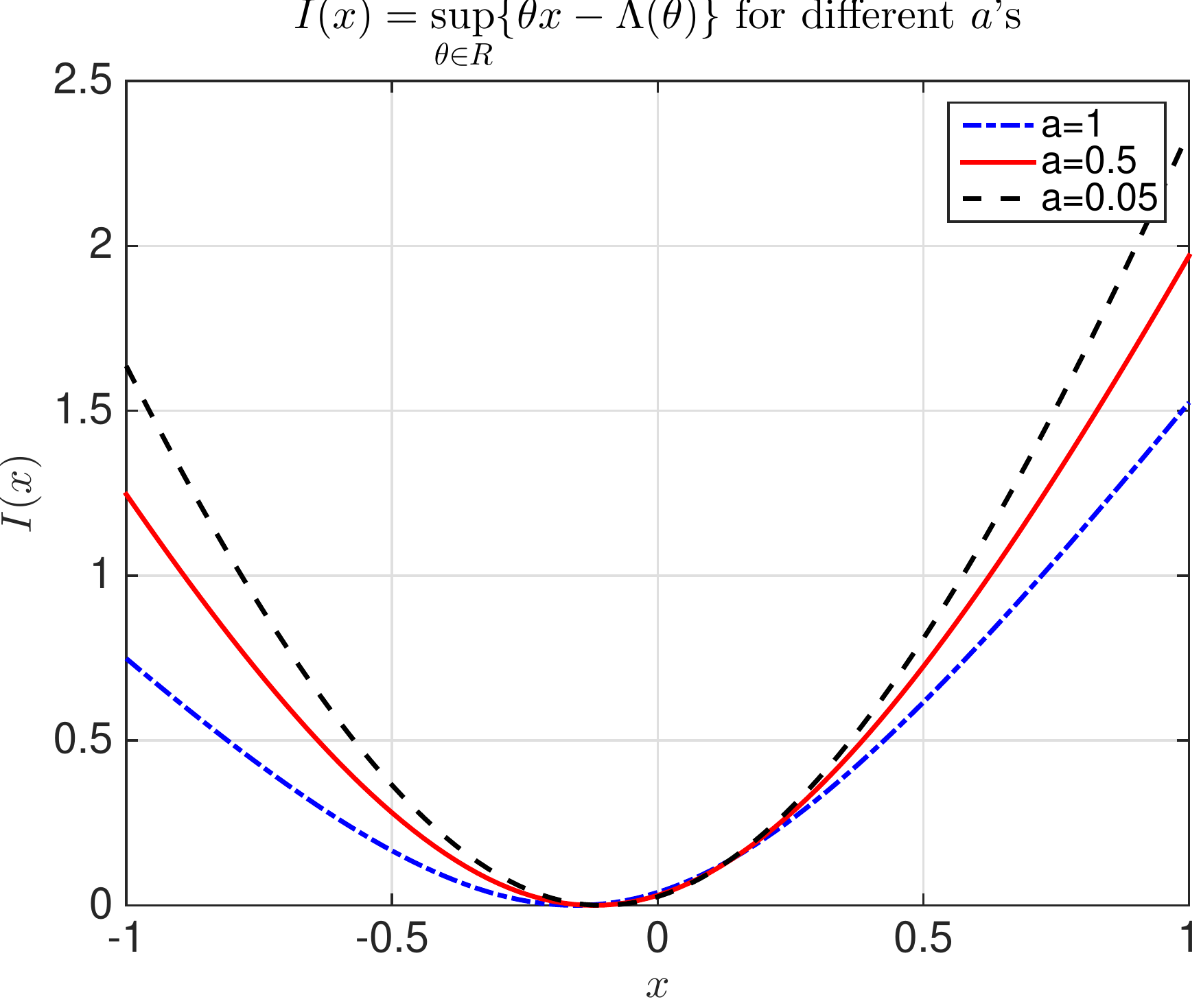}
	} 
	\subfigure{ 
	\includegraphics[scale=0.35]{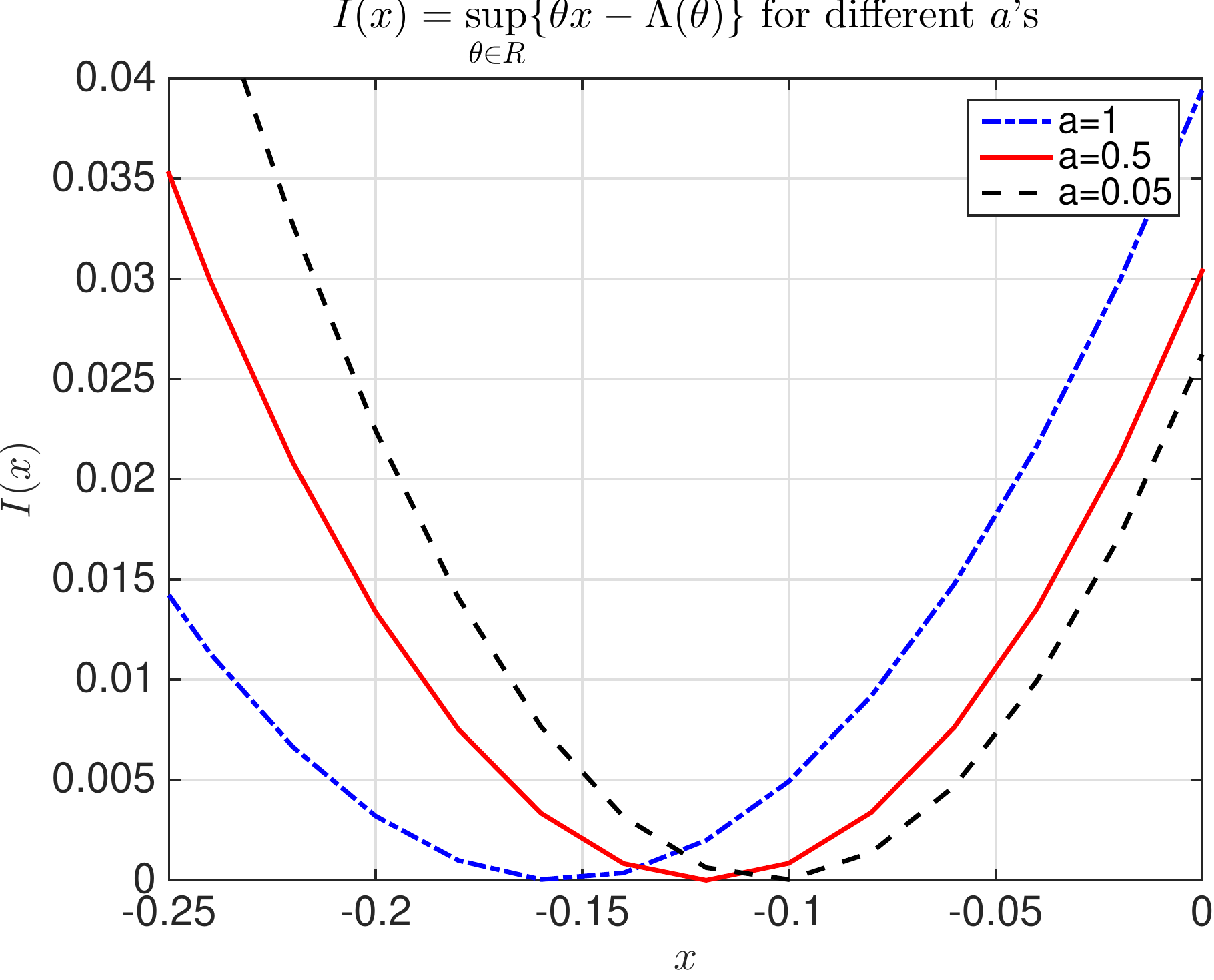}
	} 
	\caption{Left: $I(x)$ for $a=0.05,~0.5$ and $1$; Right: Zoom-in of left figure near $I(x)=0$.}
	\label{fig:I_x_different_a}
\end{figure}

\begin{figure}[H]
	\centering 
	\subfigure{
	\includegraphics[scale=0.35]{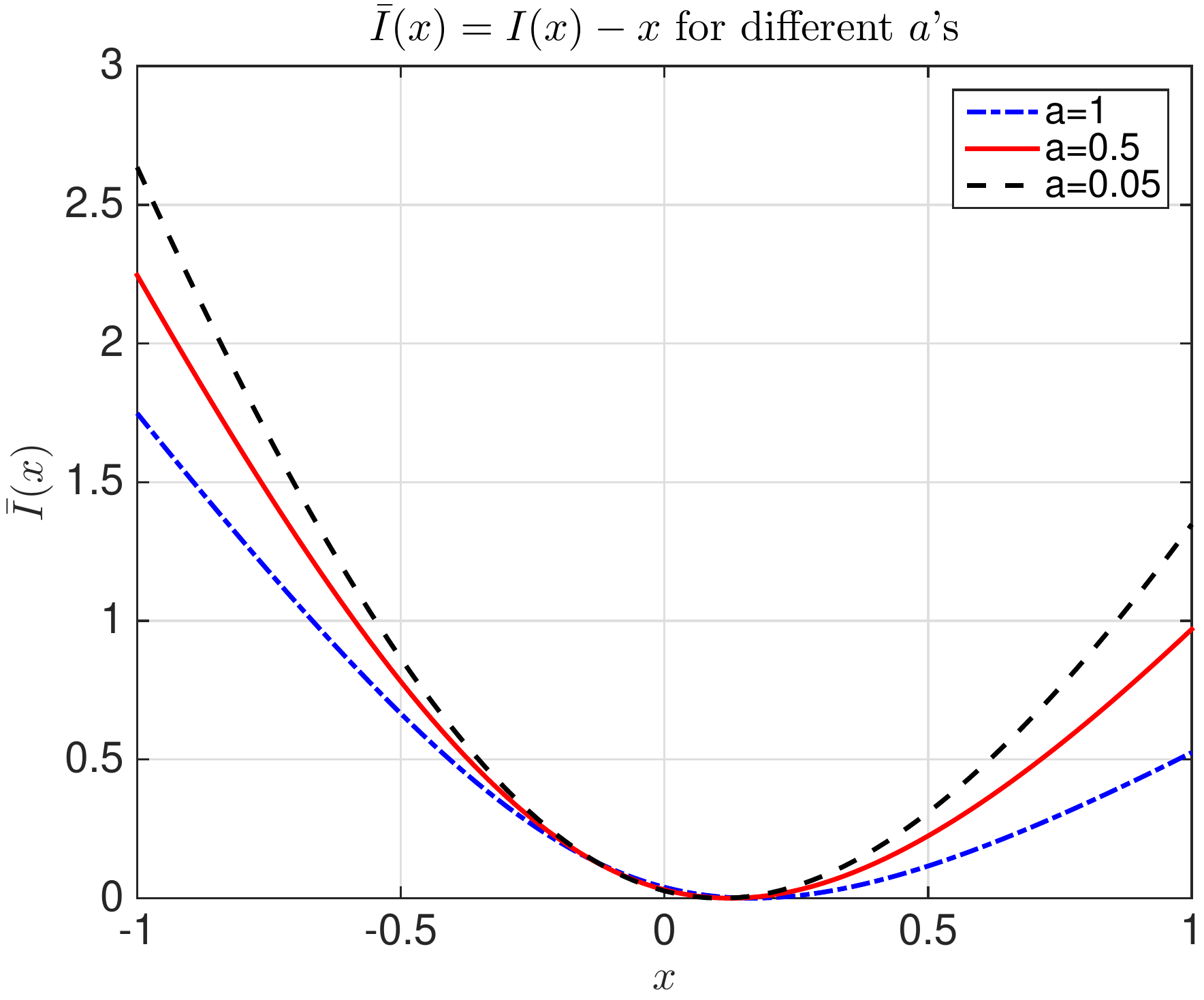}
	} 
	\subfigure{ 
	\includegraphics[scale=0.35]{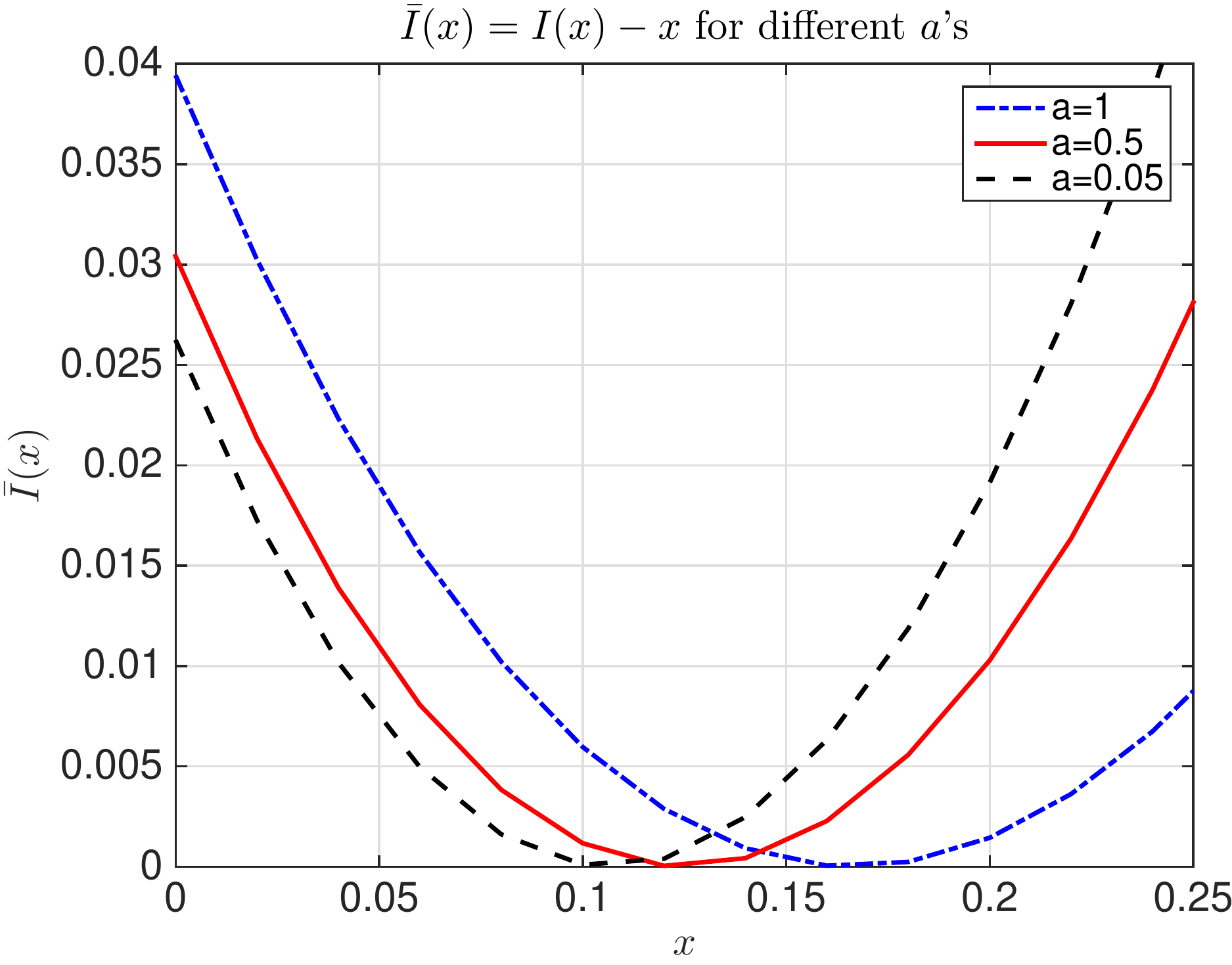}
	} 
	\caption{Left: $\bar{I}(x)$ for $a=0.05,~0.5$ and $1$; Right: Zoom-in of left figure near $\bar{I}(x)=0$.}
	\label{fig:I_bar_x_different_a}
\end{figure}

\begin{figure}[H]
	\centering
	\includegraphics[scale=0.5]{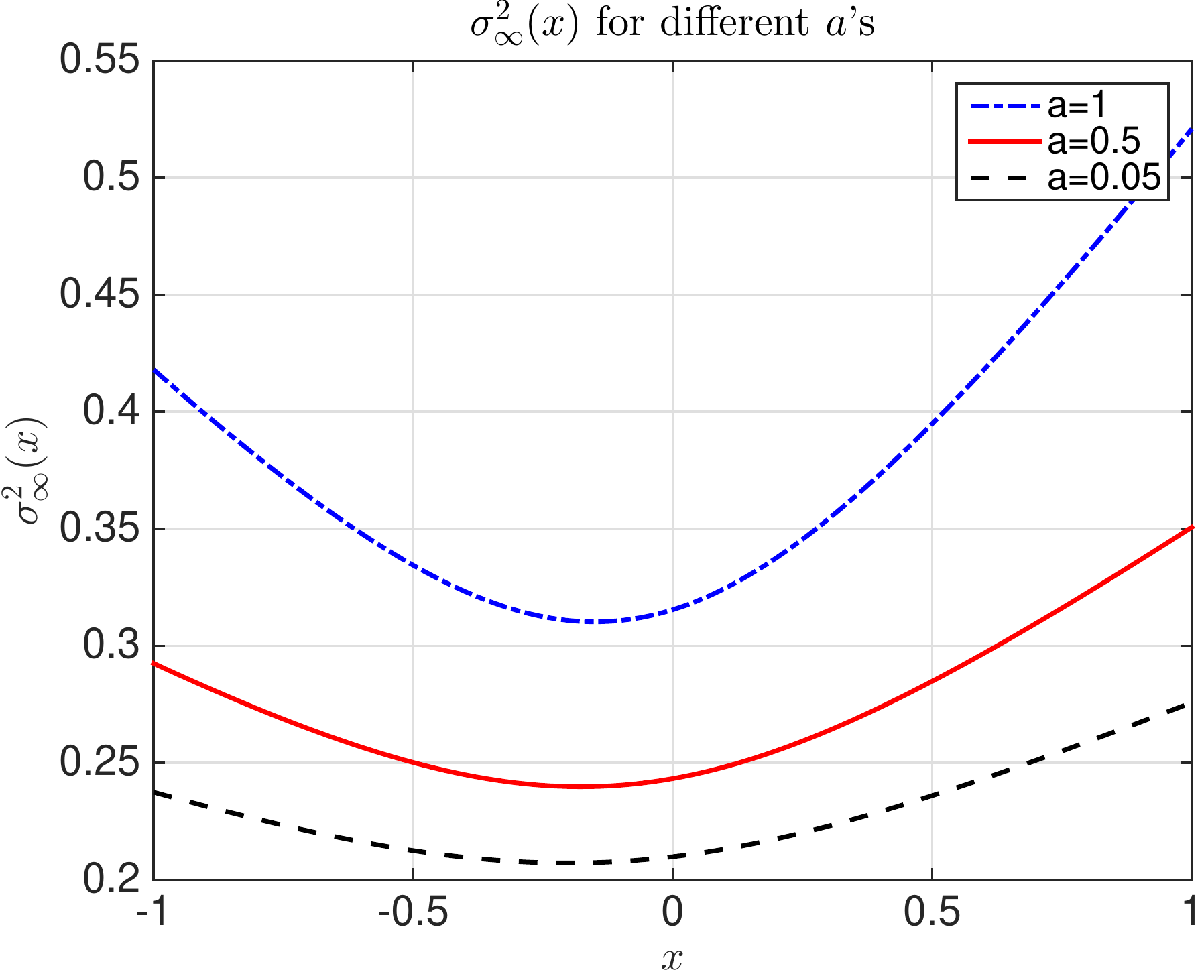}\\
	\caption{$\sigma_\infty^2(x)$ for $a=0.05,~0.5$ and $1$.}
	\label{fig:implied_vol_x_different_a}
\end{figure}

\begin{figure}[H]
	\centering 
	\subfigure{
	\includegraphics[scale=0.35]{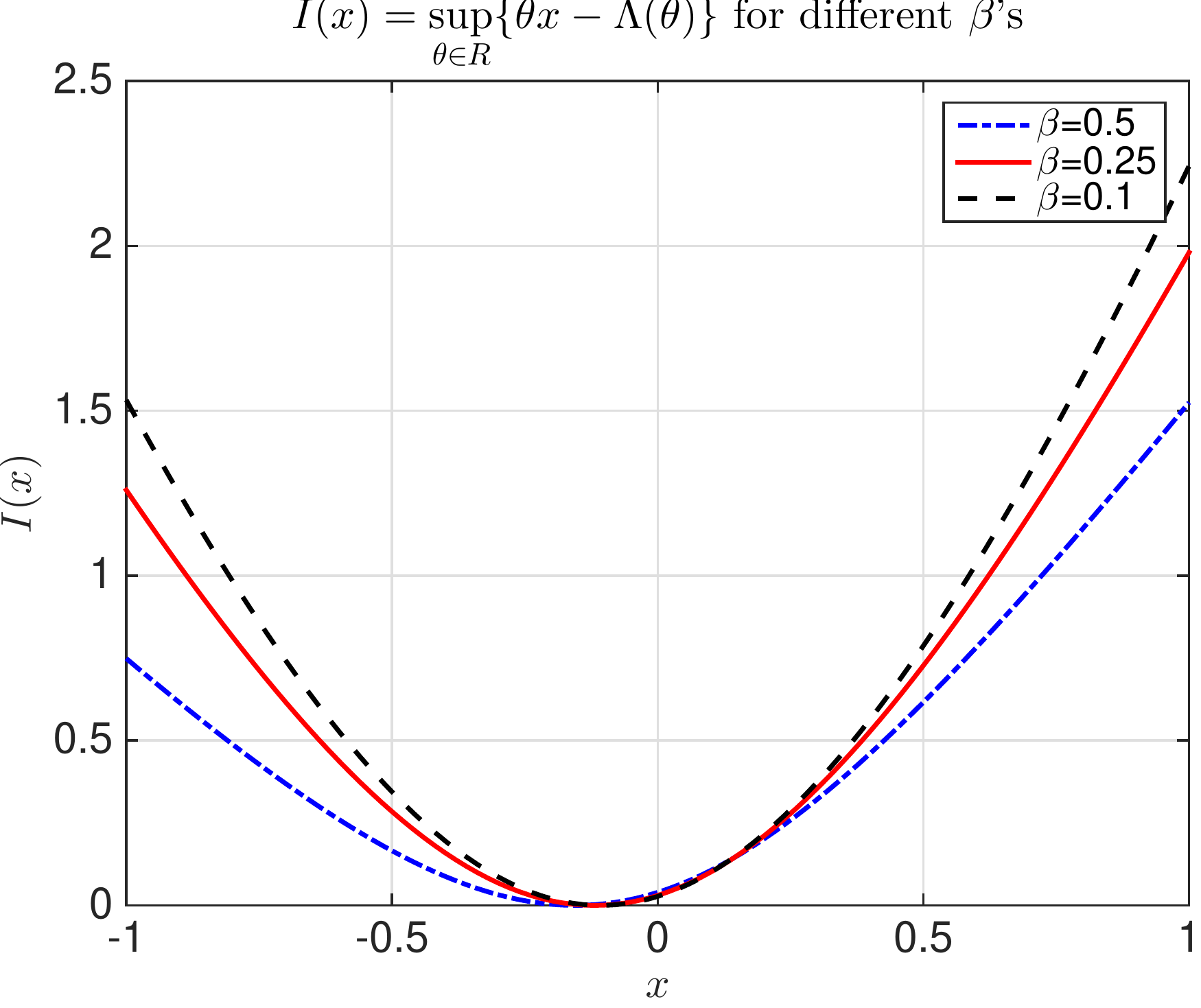}
	} 
	\subfigure{ 
	\includegraphics[scale=0.35]{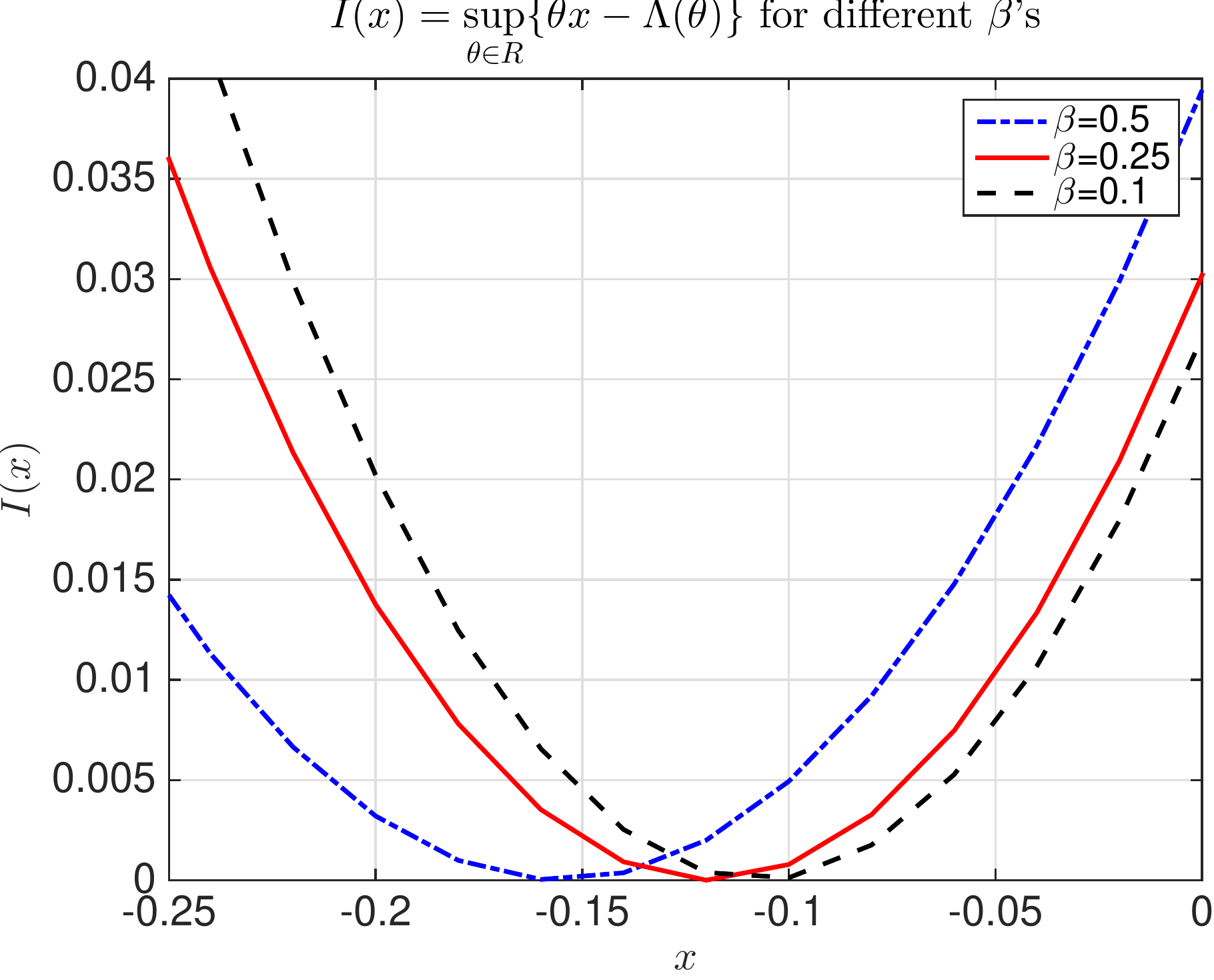}
	} 
	\caption{Left: $I(x)$ for $\beta=0.1,~0.25$ and $0.5$; Right: Zoom-in of left figure near $I(x)=0$.}
	\label{fig:I_x_different_beta}
\end{figure}

\begin{figure}[H]
	\centering 
	\subfigure{
	\includegraphics[scale=0.35]{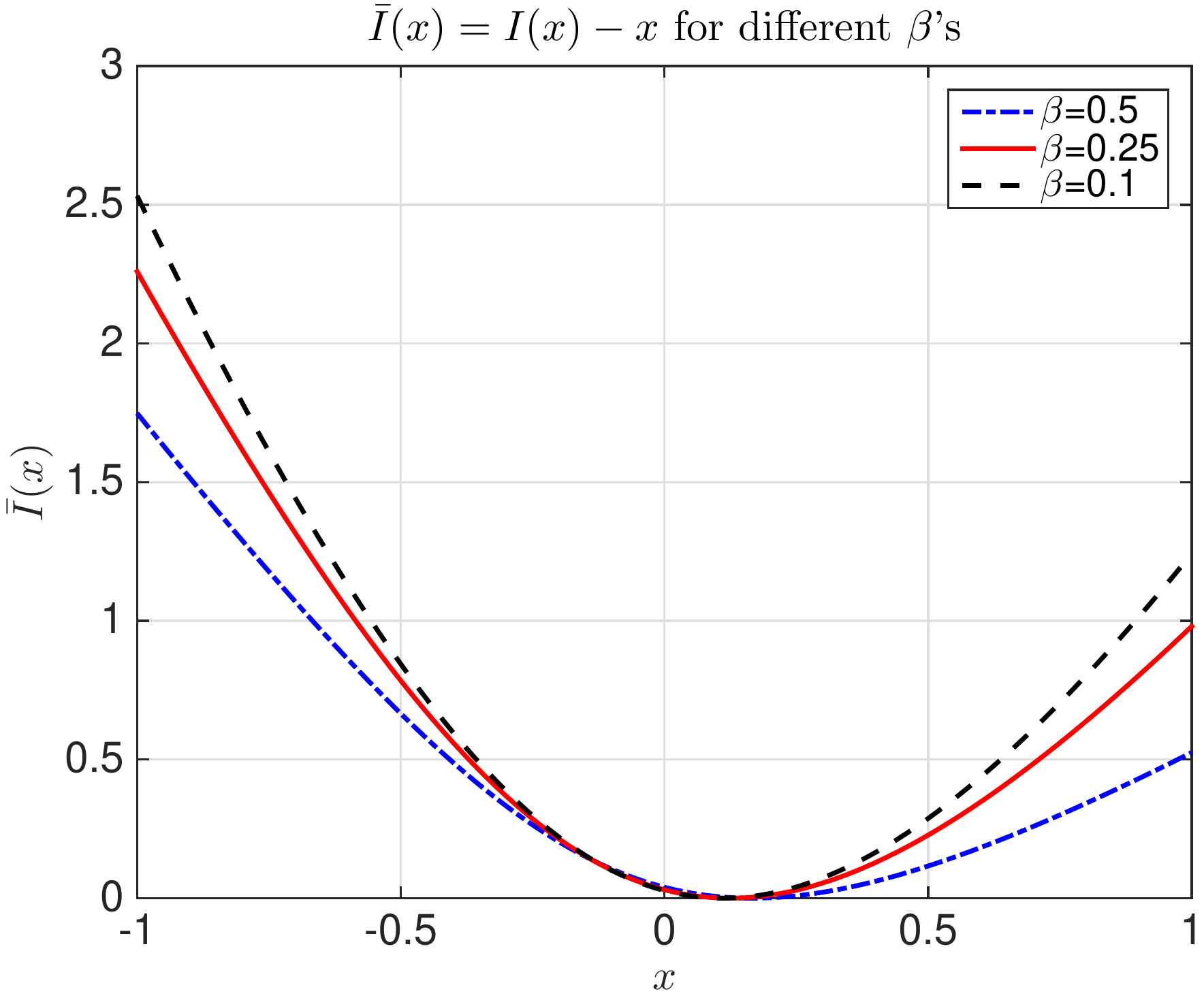}
	} 
	\subfigure{ 
	\includegraphics[scale=0.35]{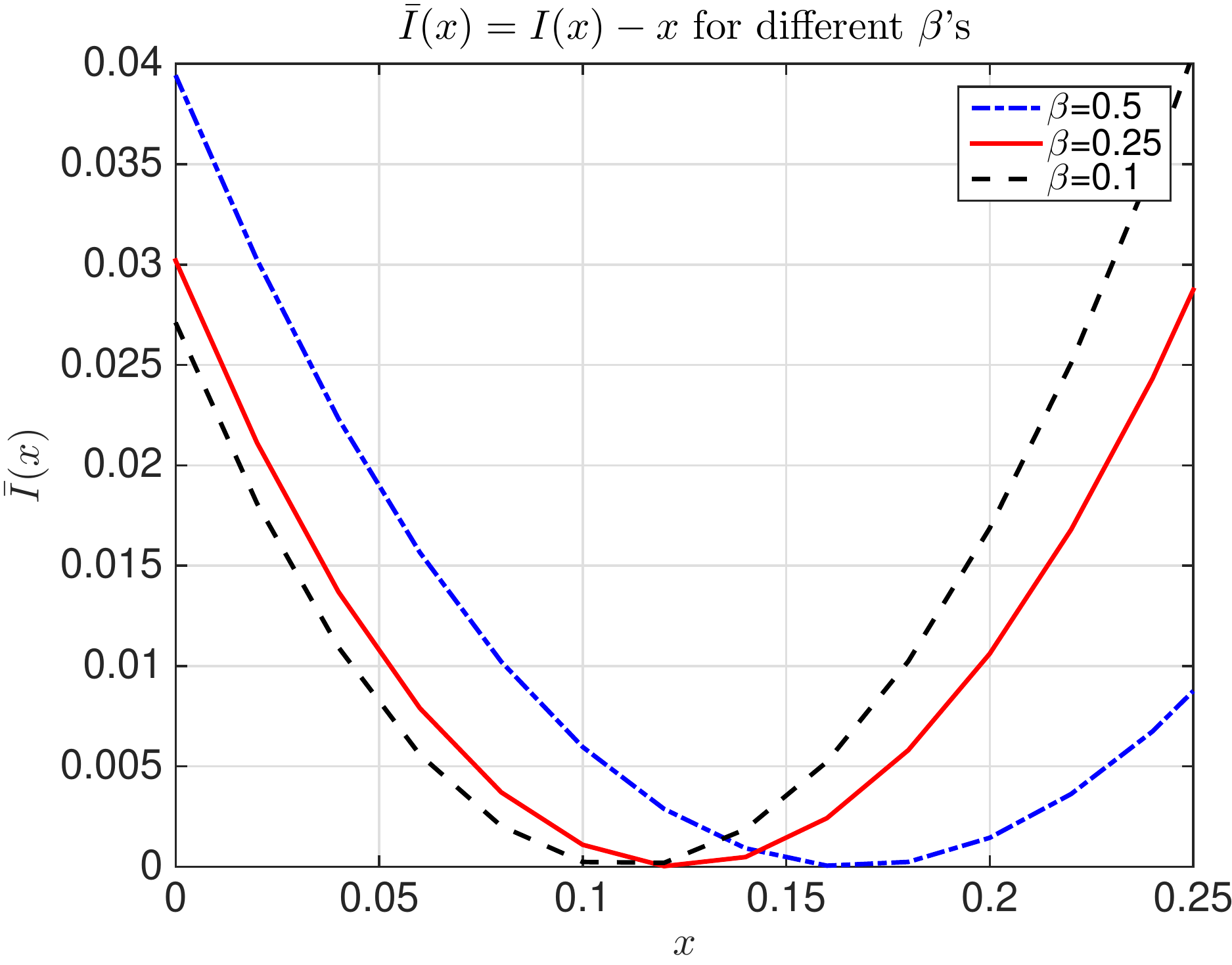}
	} 
	\caption{Left: $\bar{I}(x)$ for $\beta=0.1,~0.25$ and $0.5$; Right: Zoom-in of left figure near $\bar{I}(x)=0$.}
	\label{fig:I_bar_x_different_beta}
\end{figure}

\begin{figure}[H]
	\centering
	\includegraphics[scale=0.5]{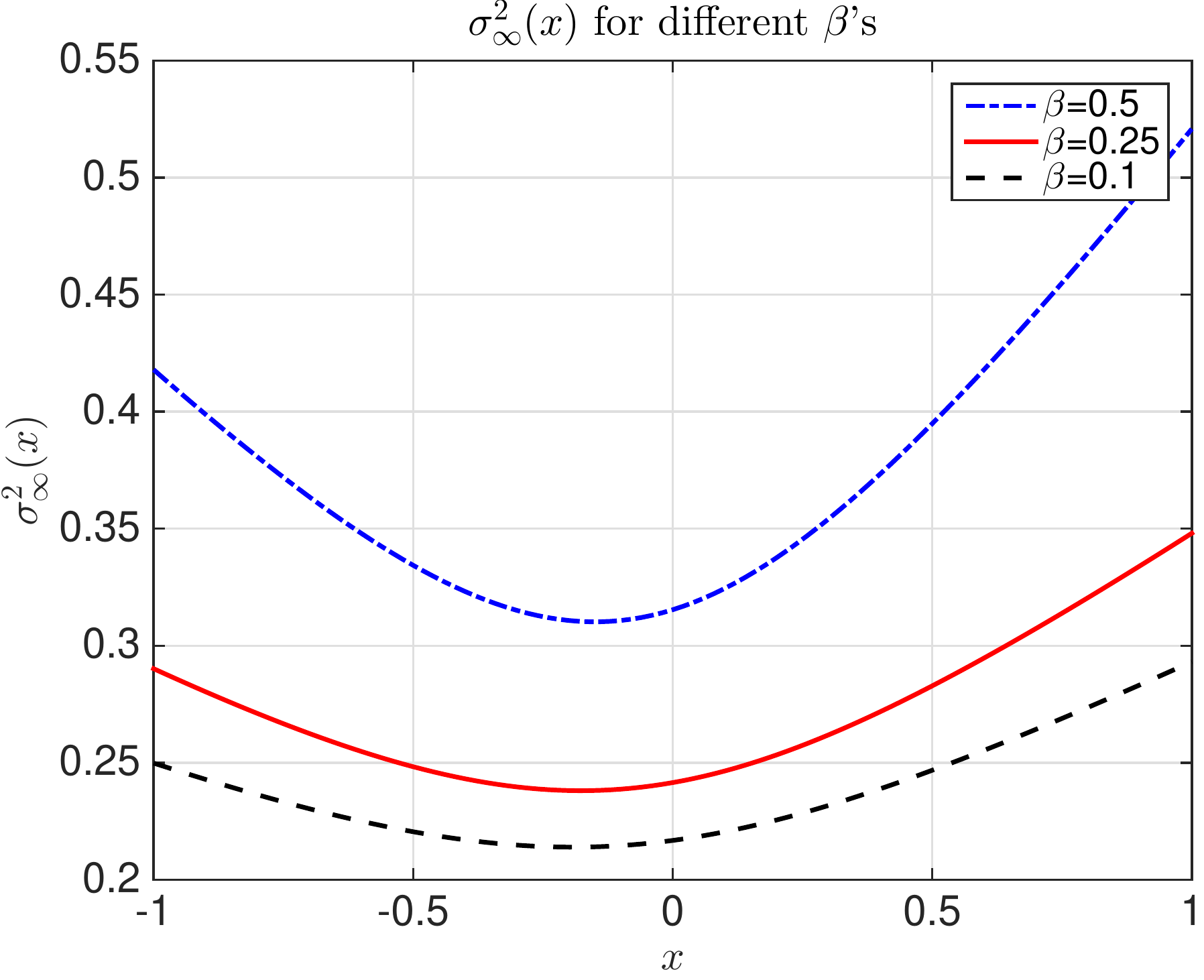}\\
	\caption{$\sigma_\infty^2(x)$ for $\beta=0.1,~0.25$ and $0.5$.}
	\label{fig:implied_vol_x_different_beta}
\end{figure}

Numerical examples in fixed-maturity large, small-strike and large-strike are presented. 
The left figure in Figure \ref{fig:fixed_T_small_large_strike_a} shows the ratio of Black-Scholes implied volatility to log-moneyness in the fixed-maturity and large-strike regime for different $a$ values; while right figure displays the ratio in the fixed-maturity and small-strike regime.
The maturity $T$ is chosen within a reasonable range.
In both figures, we observe that, for a given $T$, the ratio of implied volatility to log-moneyness increases as the self-exciting intensity parameter $a$ increases.
It is interesting to point out that, in these regimes, the ratio of Black-Scholes implied volatility to log-moneyness decreases as maturity increases. This is practically observed on an implied volatility surface.
Results for various values of $\beta$'s are provided in Figure \ref{fig:fixed_T_small_large_strike_beta}. We obtain similar results because $\beta$ controls the strength of the self-exciting process as well.

\begin{figure}[H]
	\centering 
	\subfigure{
	\includegraphics[scale=0.35]{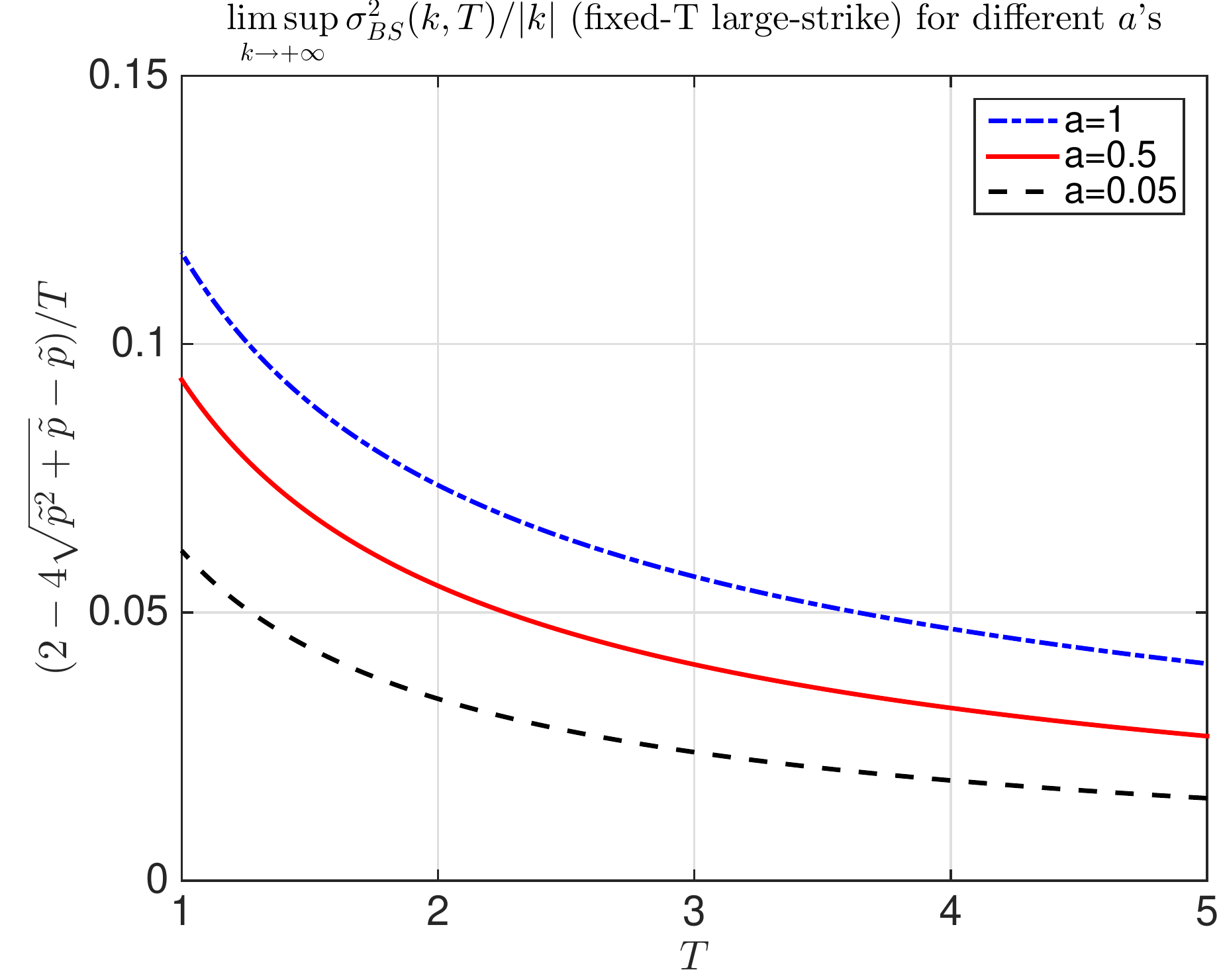}
	} 
	\subfigure{ 
	\includegraphics[scale=0.35]{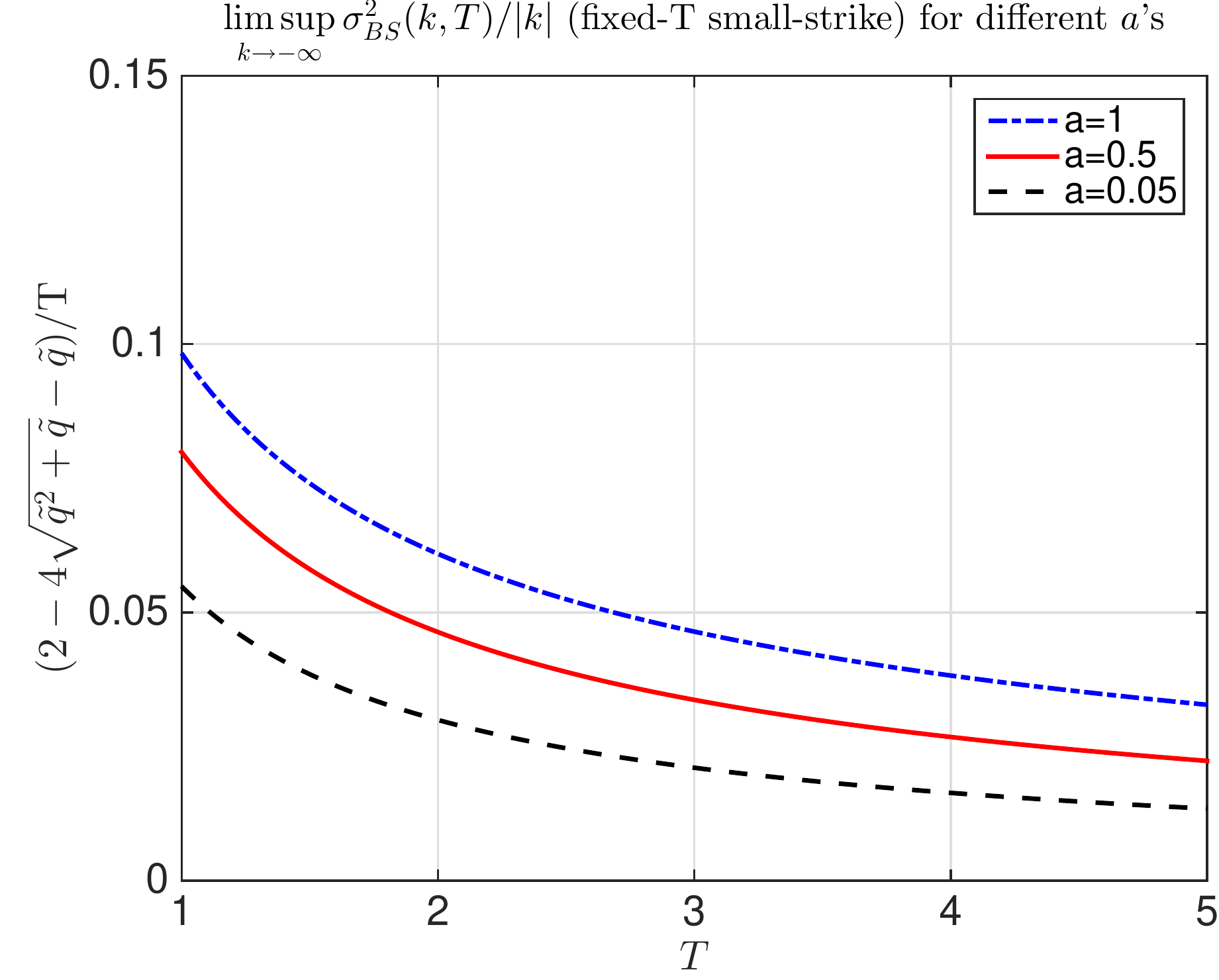}
	} 
	\caption{Left: $\mathop{\limsup}\limits_{k\rightarrow+\infty}\frac{\sigma^{2}_{\text{BS}}(k,T)}{|k|}$ (fixed-maturity large-strike) for $a=0.05,~0.5$ and $1$; Right: $\mathop{\limsup}\limits_{k\rightarrow-\infty}\frac{\sigma^{2}_{\text{BS}}(k,T)}{|k|}$ (fixed-maturity small-strike) for $a=0.05,~0.5$ and $1$.}
	\label{fig:fixed_T_small_large_strike_a}
\end{figure}

\begin{figure}[H]
	\centering 
	\subfigure{
	\includegraphics[scale=0.35]{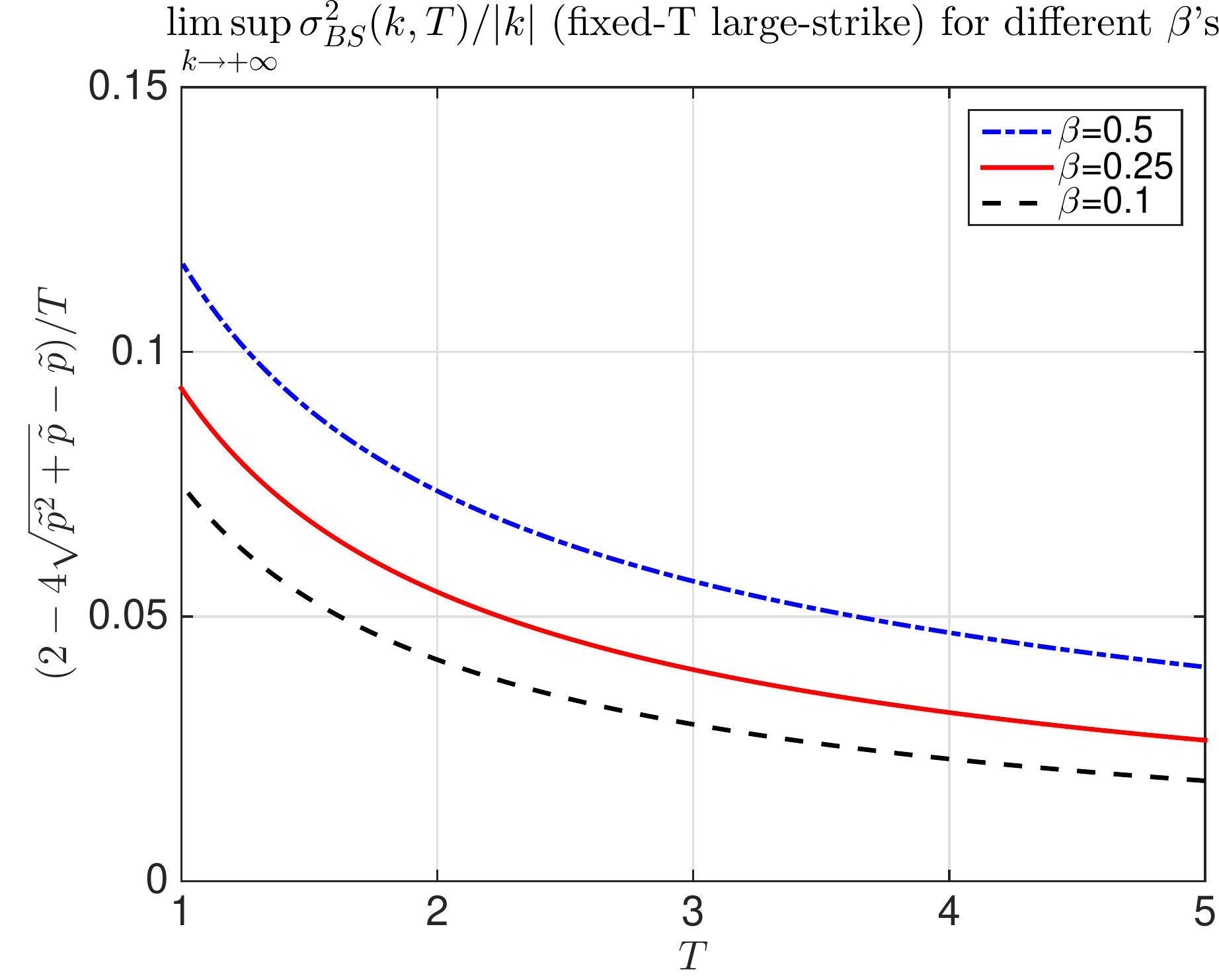}
	} 
	\subfigure{ 
	\includegraphics[scale=0.35]{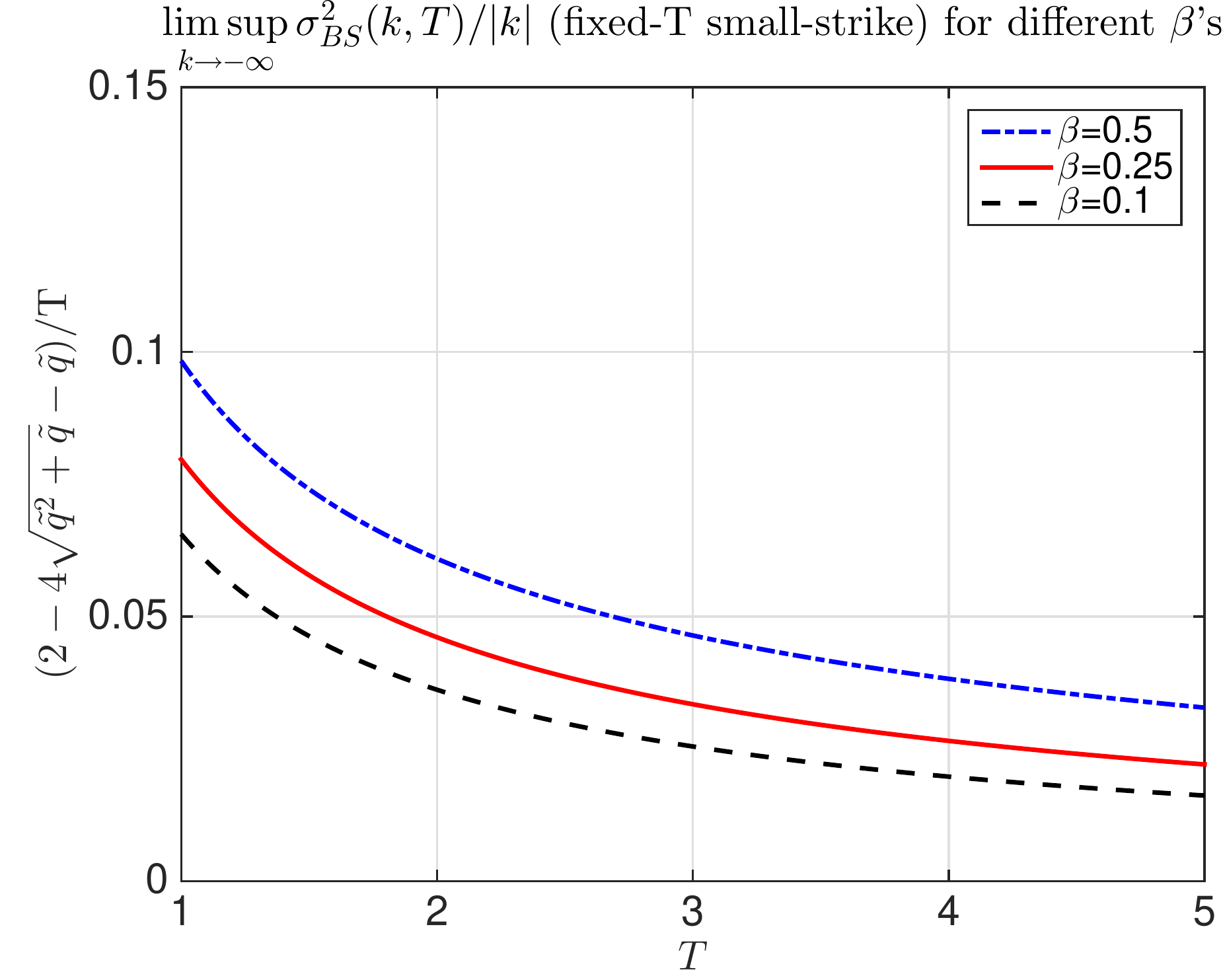}
	} 
	\caption{Left: $\mathop{\limsup}\limits_{k\rightarrow+\infty}\frac{\sigma^{2}_{\text{BS}}(k,T)}{|k|}$ (fixed-maturity large-strike) for $\beta=0.1,~0.25$ and $0.5$; Right: $\mathop{\limsup}\limits_{k\rightarrow-\infty}\frac{\sigma^{2}_{\text{BS}}(k,T)}{|k|}$ (fixed-maturity small-strike) for $\beta=0.1,~0.25$ and $0.5$.}
	\label{fig:fixed_T_small_large_strike_beta}
\end{figure}

\section{Concluding Remarks}
${}\quad$ In this paper, we study the asymptotic behaviors of the implied volatility of an affine jump-diffusion model.
Let $X_t=\log(S_t/S_0)$ and $S_t$ follows an affine jump-diffusion model under risk-neutral measure.
By applying the Feynman-Kac formula, we compute the moment generating function for $X_t$. An explicit form of the moment generating function can be found by solving a set of ordinary differential equations. 
A large-maturity large deviation principle for $X_t$ is obtained by using the G\"{a}rtner-Ellis Theorem.
We characterize the asymptotic behaviors of implied volatility for $X_t$ in the joint regime of large-maturity and large-strike regime.
We use Lee's moment formula to derive the asymptotics for Black-Scholes implied volatility in the fixed-maturity, large-strike and fixed-maturity, small-strike regimes. 
Numerical studies are provided to validate the theoretical work.
We observe the volatility smiles in the joint regime of large-maturity and large-strike.
As the self-exciting intensity parameter ($a$ or $\beta$) increases, which means more rare events tending to occur, the ATM volatility increases and volatility smile tends to be more convex.
Ratios of Black-Scholes implied volatility to log-moneyness in fixed-maturity large, small-strike and large-strike regimes are shown.
For a given maturity $T$, as the self-exciting parameter ($a$ or $\beta$) increases, the ratio of implied volatility to log-moneyness increases.
In these two regimes, we observe the ratio of implied volatility to log-moneyness declines as the maturity increases and this is usually detected on an implied volatility surface in practice.

\section*{Acknowledgments}
Nian Yao was supported in part by Natural Science Foundation of Guangdong Province under Grant 2019A1515012192. Author Nian Yao and Zhiqiu Li acknowledge Dr. Lingjiong Zhu for many useful discussions and comments.


\bibliographystyle{apalike}
\bibliography{ref}

\end{document}